\newtheorem{Theorem}{Theorem}
\newtheorem{Example}{Example}
\newtheorem{Lemma}{Lemma}
\newcommand{\tabincell}[2]{\begin{tabular}{@{}#1@{}}#2\end{tabular}}
\begin{document}
	\title{A Generic Transformation to Generate MDS Storage Codes with $\delta$-Optimal Access Property
		\author{Yi Liu, Jie Li, and Xiaohu Tang}}
	\maketitle
	\date{}
	\begin{abstract}
		For high-rate maximum distance separable (MDS) codes, most of them are designed to optimally repair a single failed node by connecting all the surviving nodes. However, in practical systems, sometimes not all the surviving nodes are available. To facilitate the practical storage system, a few constructions of $(n,k)$ MDS codes with the property that any single failed node can be optimally repaired by accessing any $d$ surviving nodes have been proposed, where $d\in [k+1:n-1)$. However, all of them either have large sub-packetization levels or are not explicit for all the parameters. To address these issues, we propose a generic transformation that can convert any $(n',k')$ MDS code to another $(n,k)$ MDS code with the optimal repair property and optimal access property for an arbitrary set of two nodes, while the repair efficiency of the remaining $n-2$ nodes can be kept. By recursively applying the generic transformation to a scalar MDS code multiple times, we get an MDS code with  the optimal repair property and the optimal access property for all nodes, which outperforms previous known MDS codes in terms of either the sub-packetization level or the flexibility of the parameters.
	\end{abstract}
	
	\begin{IEEEkeywords}
		Distributed storage, high-rate, MDS codes, sub-packetization, optimal repair bandwidth.
	\end{IEEEkeywords}
	
	\section{Introduction}
	
	Building on a large number of unreliable storage nodes, distributed storage systems have important applications in large-scale data center settings, such as Facebook's coded Hadoop, Google Colossus, and Microsoft Azure \cite{Micro}, and also in peer-to-peer storage settings, such as OceanStore \cite{ocean}, Total Recall \cite{total}, and DHash++ \cite{Dhash}. To ensure reliability, it is imperative to store the data in a redundant fashion. In stead of replication, erasure codes have been extensively deployed in distributed storage systems since they can provide higher reliability at the same redundancy level, where MDS codes are one of the most popular erasure codes because they can provide the optimal trade-off between fault tolerance and the storage overhead. When a storage node fails, a self-sustaining distributed storage system should make a repair to maintain the healthy and continuing operation of the overall system.  However, traditional MDS codes such as Reed-Solomon codes \cite{RS codes} only allow a naive repair strategy to repair a failed node, i.e., downloading the amount of the original data to first reconstruct the original file, and then to repair the failed node, this gives rise to  a significantly  large \textit{repair bandwidth}, which is defined as the amount of data downloaded from some surviving nodes to repair the failed node.

	For an $(n,k)$ MDS code with sub-packetization level $N$,  it was proved in \cite{Dimakis} that the repair bandwidth $\gamma(d)$ should satisfy
	\begin{eqnarray}\label{Eqn_bound_on_gamma}
		\gamma(d)\ge \frac{d}{d-k+1}N,
	\end{eqnarray}
	where $d\in [k+1: n)$ is the number of helper nodes.
	If the repair bandwidth of a node attains the lower bound in \eqref{Eqn_bound_on_gamma}, we say that the code has \textit{$\delta$-optimal repair property} for this node, where $\delta=d-k+1\in [2: r]$, i.e., a fraction  $\frac{1}{\delta}$ of the data stored in each of the helper nodes are required to be downloaded. In addition to the optimal repair bandwidth, it is also very desirable if the nodes have the \textit{$\delta$-optimal access property}. That is, the amount of data accessed from the helper nodes meets the lower bound in \eqref{Eqn_bound_on_gamma} during the repair process of a failed node by connecting $d=k+\delta-1$ helper nodes. In fact, the $\delta$-optimal access property implies the $\delta$-optimal repair property, but not vice versa. Recently, it was proved in \cite{tight,tight1} that the sub-packetization level $N$ of $(n,k)$ MDS codes with $\delta$-optimal access property for all nodes is no less than $ \delta^{\lceil\frac{n}{\delta} \rceil}$, where $2\le \delta \le r$, while the codes proposed in \cite{MSR pairty,PCT,Ye_barg_2} with the $r$-optimal access property for all nodes attain this lower bound.
	
	Constructions of MDS codes with the optimal repair bandwidth have been received a lot of attention during the past decade \cite{invariant subspace,Long arxiv,Etzion15,hadamard,Hadamard strategy,Zigzag,extend zigzag,Ye_barg_1,Ye_barg_2,MSR pairty,coupled,kumar,PCT,Hou,Han,Lin,Li ISIT2017,Huang}, while in practical distributed storage systems,  high-rate (i.e., $k>\frac{n}{2}$) is preferred as they require less storage overhead. Among such $(n,k)$ MDS codes, most of them devote to  the case $d=n-1$ (i.e., $\delta=r$) so as to maximally reduce the repair bandwidth since $\gamma(d)=\frac{d}{d-k+1}N$ is a decreasing function of $d$.   Nevertheless, before 2017, the repair of parity nodes are naive for all the high-rate $(n,k)$  MDS codes with $r\ge 3$. By this motivation, Li \textit{et al.} \cite{MSR pairty,Li ISIT2017}  firstly proposed a generic transformation that can convert any nonbinary  $(n,k)$ MDS code into another $(n,k)$ MDS code with $r$-optimal access property for an arbitrary set of $r$ nodes, while keeping the repair/access property  for the remaining $k$ nodes unchanged. As a result,  $(n,k)$ MDS codes with the $r$-optimal access property for all nodes were obtained, with the sub-packetization level being $r^{\lceil\frac{n}{r}\rceil}$. It was shown in \cite{tight1} that the code obtained from the second application of the transformation in \cite{MSR pairty,Li ISIT2017}, the code proposed by Ye. \textit{et al.} \cite{Ye_barg_2} and the one presented by Sasidharan \textit{et al.}\cite{PCT} are essentially equivalent.
	
	Although the generic transformation proposed in \cite{MSR pairty,Li ISIT2017} is a powerful method for building MDS codes with $r$-optimal access property for all nodes, it is not applicable to construct MDS codes with $\delta$-optimal repair property for $2\le \delta<r$.   Recently, a few explicit constructions of high-rate MDS codes with the $\delta$-optimal repair property for $2\le \delta<r$ have been reported in \cite{Ye_barg_1,coupled,kumar}. More precisely,  in \cite{coupled}, Sasidharan \textit{et al.} constructed an $(n=2\delta m,k=2\delta (m-1))$ MDS code of sub-packetization level $\delta \cdot(2\delta)^{m-1}$ by modifying the construction presented in \cite{PCT} (or \cite{Ye_barg_2,MSR pairty}), which is the first-known explicit construction of high-rate MDS code with $\delta$-optimal repair property for all nodes to the best of our knowledge. Through puncturing, one can get the other two MDS codes, i.e., $(n=2\delta m-\Delta_1,k=2\delta (m-1)-\Delta_1)$ MDS code and $(n=2\delta m-\Delta_2,k=2\delta (m-1))$ MDS code, which have the same performance as the original code, where $0\le \Delta_1<2\delta (m-1)$ and $0\le \Delta_2\le \delta$. For convenience, we refer to the code proposed in \cite{coupled} as Sasidharan-Myna-Kumar code.  It is clear that the value $\delta$ of the $(n,k)$ Sasidharan-Myna-Kumar code is required to be larger than $\lceil \frac{r}{2}\rceil$, where $r=n-k$. Later in \cite{Ye_barg_1}, by using diagonal matrices and permutation matrices as the building blocks of the parity-check matrices, Ye and Barg proposed two $(n,k)$ MDS codes with sub-packetization level $\delta^n$, which have the $\delta$-optimal repair and the $\delta$-optimal access property for all nodes, respectively. As a matter of convenience, we refer to the code based on diagonal matrices as Ye-Barg code 1 and the another one as Ye-Barg code 2. Further in \cite{kumar}, an explicit $(n,k)$ MDS code called  Vajha-Babu-Kumar code was presented, which possesses not only  $\delta$-optimal access property for all nodes but also achieves the optimal sub-packetization level $\delta^{\lceil\frac{n}{\delta}\rceil}$, however, this code is implicit for $\delta\ge 5$. Basically, all of the four aforementioned $(n,k)$ MDS codes either have  large sub-packetization levels or are not explicit for any single value of $\delta$ in the range $[2:r]$.  
	
	In this paper, we address the unsolved problem, to build explicit MDS codes with the $\delta$-optimal repair property for all nodes for an arbitrary $\delta \in [2:r]$, and also with a relatively small sub-packetization level. Firstly, we provide a generic transformation that can convert any known MDS code defined in the parity-check matrix form into another MDS code, which makes an arbitrary set of two nodes having the $\delta$-optimal access property, and simultaneously preserves the $\delta$-optimal access property for the remaining nodes if the original ones have. Secondly, recursively applying the transformation to an $(n'=n+\delta\lceil\frac{n}{2}\rceil,k'=k+\delta\lceil\frac{n}{2}\rceil)$ scalar MDS code $\lceil\frac{n}{2}\rceil$ times, we get an explicit construction of high-rate $(n,k)$ MDS code $\mathcal{C}$ of sub-packetization level $\delta^{\lceil\frac{n}{2} \rceil}$ over $\mathbb{F}_q$ with $q\ge n+\delta\lceil\frac{n}{2}\rceil$, which has the $\delta$-optimal access property for all nodes, where $2\le \delta\le r$. A comparison of some key parameters between the four aforementioned codes and new code $\mathcal{C}$ is given in Table \ref{Table comp Ye Kumar}.  It is seen from this comparison that the new MDS code $\mathcal{C}$ obtained from the generic transformation has the following advantages:
	\begin{table*}[htbp]
		\centering
		\caption{A comparison among some $(n,k=n-r)$ MDS code with $\delta$-optimal repair property for all nodes, where $2\le \delta \le r$.}\label{Table comp Ye Kumar}
		\begin{tabular}{|c|c|c|c|c|c|}
			\hline Codes & \tabincell{c}{Sub-packetization\\ level $N$} & Field size $q$ & $\delta$ & Optimal access & Reference\\
			\hline
			\hline \tabincell{c}{Sasidharan-Myna-Kumar\\ code} & $\delta \cdot (2\delta)^{\lceil \frac{n}{2\delta}\rceil}$ & $q\ge n$ & $\lceil \frac{r}{2} \rceil \le \delta \le r$ & No & \cite{coupled}\\
			\hline Ye-Barg code 1 & $\delta^n$ & $q\ge \delta n$ & $2\le \delta \le r$ & No  & \cite{Ye_barg_1}\\
			\hline Ye-Barg code 2 & $\delta^n$ & $q\ge n+1$ & $2\le \delta \le r$ & Yes  & \cite{Ye_barg_1}\\
			\hline Vajha-Babu-Kumar code  & $\delta^{\lceil \frac{n}{\delta}\rceil}$ & $\begin{array}{ll}
				q \ge 6 \lceil \frac{n}{2} \rceil+2, & \delta=2\\
				q \ge 18 \lceil \frac{n}{\delta} \rceil+2, & \delta=3,4\\
			\end{array}$ & $\delta \in \{2,3,4\}$ & Yes  & \cite{kumar}\\
			\hline New Code $\mathcal{C}$ & $\delta^{\lceil \frac{n}{2}\rceil}$ & $q\ge n+\lceil \frac{n}{2}\rceil \delta$ & $2\le \delta \le r$ & Yes & Theorem \ref{Cro the desired code C}\\
			\hline
		\end{tabular}
	\end{table*}
	
	\begin{itemize}
		\item The new $(n,k=n-r)$ code $\mathcal{C}$ has the $\delta$-optimal access property for any single value of $\delta$ with $2\le \delta\le r$.
		\item In contrast to the Sasidharan-Myna-Kumar code, the range of parameter $\delta$ of our new code $\mathcal{C}$ is much broader under the same parameter $n$ and $k$. In addition, Sasidharan-Myna-Kumar code does not have the $\delta$-optimal access property for all nodes but only the $\delta$-optimal repair property.  As a cost, the finite field size and sub-packetization level 
		of the new code $\mathcal{C}$ are larger than those of  Sasidharan-Myna-Kumar code when $\lceil \frac{r}{2} \rceil \le \delta \le r$.
		\item With the same parameters $n$, $k$ and $\delta$, the sub-packetization level of the new $(n,k)$ MDS code $\mathcal{C}$ is much smaller than that of the Ye-Barg codes 1 and 2.  Furthermore, the finite field size of code $\mathcal{C}$ is smaller than that of the Ye-Barg code 1, but a little larger than that of the Ye-Barg code 2.		
		\item Compared with the $(n,k)$ Vajha-Babu-Kumar code which is explicit only for $\delta\in \{2,3,4\}$, the new $(n,k)$ MDS code $\mathcal{C}$ is explicit for all $\delta \in [2:r]$, though its sub-packetization level  is larger for $\delta=3,4$. Besides, when $\delta=2$, the new code $\mathcal{C}$ is built on a smaller finite field and has the same optimal sub-packetization level. 
	\end{itemize}

	\section{Preliminaries}\label{section S2}
	For two non-negative integers $a$ and $b$ with $a<b$, define $[a:b)$ and $[a:b]$ as two ordered sets $\{a,a+1,\cdots,b-1\}$ and $\{a,a+1,\cdots,b\}$, respectively.  Let $\mathbb{F}_q$ be the finite field with $q$ elements where $q$ is a prime power. An $(n,k)$ MDS code encodes a file of size $\mathcal{M}=kN$ into $n$ fragments $\mathbf{f}_0,\mathbf{f}_1,\cdots,\mathbf{f}_{n-1}$, which are stored across $n$ nodes, respectively, where $\mathbf{f}_i=(f_{i,0},f_{i,1},\cdots,f_{i,N-1})^{\top}$ is a column vector of length $N$ over $\mathbb{F}_q$ and $\top$ denotes the transpose operator. In this paper, the MDS codes are assumed to be defined by the following parity-check form:
	\begin{equation}\label{Eqn Code C}
		\underbrace{\left(
			\begin{array}{cccc}
				A_{0,0} & A_{0,1} & \cdots & A_{0,n-1}\\
				A_{1,0} & A_{1,1} & \cdots & A_{1,n-1}\\
				\vdots & \vdots &\ddots & \vdots\\
				A_{r-1,0}&A_{r-1,1}&\cdots&A_{r-1,n-1}\\
			\end{array}
			\right)}_{block~ matrix~ A}
		\left(
		\begin{array}{c}
			\mathbf{f}_0\\\mathbf{f}_1\\\vdots\\\mathbf{f}_{n-1}\\
		\end{array}
		\right)=\mathbf{0}_{rN}
	\end{equation}
	where $r=n-k$, $\mathbf{0}_{N}$ (resp. $\mathbf{0}_{N\times M})$ denotes the zero column vector of length $N$ (resp. the $N\times M$ zero matrix), and will be abbreviated as $\mathbf{0}$ in the sequel if the length (resp. dimensions) is clear.
	
	Note that for each $t\in [0:r)$, $\sum\limits_{i=0}^{n-1}A_{t,i}\mathbf{f}_i=\mathbf{0}$ contains $N$ equations, for convenience, we say that $\sum\limits_{i=0}^{n-1}A_{t,i}\mathbf{f}_i=\mathbf{0}$ is the $t$-th \textit{parity-check group} (PCG), where $A_{t,i}$ is an $N\times N$ matrix  over $\mathbb{F}_q$, called the \textit{parity-check matrix} of node $i$ for the $t$-th PCG. Moreover, for a given $t\in [0:r)$ and $i\in [0:n)$, we call the data $A_{t,i}\mathbf{f}_i$ as the \textit{PCG-data} of node $i$ in the $t$-th PCG. 	
	
	
	\subsection{MDS Property}\label{section sec2-1}
	
	An $(n,k)$ code is said to have the MDS property if the original file can be reconstructed by connecting any $k$ out of the $n$ nodes.
	That is, the data stored in any set of  $r=n-k$ nodes can be obtained by the remaining $k$ nodes. By \eqref{Eqn Code C}, we have
	\begin{eqnarray*}
		\underbrace{\left(
			\begin{array}{cccc}
				A_{0,i_1} & A_{0,i_2} & \cdots & A_{0,i_r}\\
				A_{1,i_1} & A_{1,i_2} & \cdots & A_{1,i_r}\\
				\vdots & \vdots &\ddots & \vdots\\
				A_{r-1,i_1}&A_{r-1,i_2}&\cdots&A_{r-1,i_r}\\
			\end{array}
			\right)}_{block~ matrix~ A_r}\left(
		\begin{array}{c}
			\mathbf{f}_{i_1}\\\mathbf{f}_{i_2}\\\vdots\\\mathbf{f}_{i_r}\\
		\end{array}
		\right)=-\left(
		\begin{array}{cccc}
			A_{0,j_1} & A_{0,j_2} & \cdots & A_{0,j_k}\\
			A_{1,j_1} & A_{1,j_2} & \cdots & A_{1,j_k}\\
			\vdots & \vdots &\ddots & \vdots\\
			A_{r-1,j_1}&A_{r-1,j_2}&\cdots&A_{r-1,j_k}\\
		\end{array}
		\right)\left(
		\begin{array}{c}
			\mathbf{f}_{j_1}\\\mathbf{f}_{j_2}\\\vdots\\\mathbf{f}_{j_k}\\
		\end{array}
		\right)
	\end{eqnarray*}
	where $\{i_1,\cdots,i_r\}\subseteq [0:n)$ and $\{j_1,\cdots,j_k\}= [0:n)\backslash\{i_1,\cdots,i_r\}$, which implies that the code has the MDS property if any $r\times r$ sub-block matrix of $A$ is nonsingular. 
	
	\subsection{Optimal repair property}
	According to \eqref{Eqn_bound_on_gamma}, when repairing a failed  node $i\in [0:n)$ of an $(n,k)$ MDS code by connecting any $d=k+\delta-1$ ($2\le \delta \le r$) surviving  nodes, the $\delta$-optimal repair property demands to download $\frac{N}{\delta}$ symbols from each helper node $j\in \mathcal{H}$, where $\mathcal{H}$ denotes the set of indices of the $d$ helper nodes. In fact, the data downloaded from helper node $j$ can be represented by $R_{i,j,\delta}^{\mathcal{H}}\mathbf{f}_j$, where $R_{i,j,\delta}^{\mathcal{H}}$ is an $\frac{N}{\delta}\times N$ matrix of full rank. Like most work in the literature \cite{invariant subspace,MSR pairty,hadamard,Etzion15,PCT,Hadamard strategy,Zigzag,Ye_barg_1,Ye_barg_2,Long arxiv}, throughout this paper,  to simplify the repair strategy,  we always assume $R_{i,j,\delta}^{\mathcal{H}}=R_{i,\delta}$ for any set $\mathcal{H}$,  all $0 \le i < n$ and $j \in \mathcal{H}$, where the $\frac{N}{\delta}\times N$  matrix $R_{i,\delta}$ is called the \textit{$\delta$-repair matrix} of node $i$. In addition,	the code is preferred to have the  $\delta$-optimal access property, i.e., when repairing a failed node $i\in [0:n)$, the amount of accessed data attains the lower bound in \eqref{Eqn_bound_on_gamma}. Clearly, node $i$ has the $\delta$-optimal access property if the repair matrix $R_{i,\delta}$  satisfies that each row has only one nonzero element.
	
	Obviously, we need at least $N$ linear independent equations out of those $rN$ parity-check equations in  \eqref{Eqn Code C}  to recover the  $N$ unknowns of $\mathbf{f}_i$, $i\in [0:n)$. In this paper, similar to that in \cite{Li_near}, for convenience, we only consider those liner combinations from the same parity-check group. Precisely, for any $t\in [0:r)$, we acquire $\frac{N}{\delta}$  linear independent equations from the $t$-th PCG of \eqref{Eqn Code C} by multiplying it with an $\frac{N}{\delta} \times N$ matrix $S_{i,\delta}$ of rank $\frac{N}{\delta}$, i.e.,
	\begin{eqnarray}\label{Eqn linear equations for t}
		S_{i,\delta} A_{t,i}\mathbf{f}_i+\sum_{j\in \mathcal{D}}S_{i,\delta} A_{t,j}\mathbf{f}_j+\sum_{j\in \mathcal{H}}S_{i,\delta} A_{t,j}\mathbf{f}_j=\mathbf{0},\,\,t\in [0:r),
	\end{eqnarray}
	where $S_{i,\delta}$ is called the \textit{$\delta$-select matrix} of node $i$ and  $\mathcal{D}=[0:n)\backslash(\mathcal{H}\cup \{i\})$ is the set of  indices of the $r-\delta$ nodes which are not connected. In particular, $\mathcal{D}=\emptyset$ if  $\delta=r$.
	
	By \eqref{Eqn linear equations for t}, the following linear system of equations (LSE) are available
	\begin{eqnarray}\label{Eqn linear equations}
		\underbrace{\left(\begin{array}{c}
				S_{i,\delta} A_{0,i}\\
				S_{i,\delta} A_{1.i}\\
				\vdots\\
				S_{i,\delta} A_{r-1,i}
			\end{array}\right)\mathbf{f}_i}_{\mathrm{useful~ data}}+\sum_{j\in \mathcal{D}}\underbrace{\left(\begin{array}{c}
				S_{i,\delta} A_{0,j}\\
				S_{i,\delta} A_{1.j}\\
				\vdots\\
				S_{i,\delta} A_{r-1,j}
			\end{array}\right)\mathbf{f}_j}_{\mathrm{intermediate~ data}}
		+\sum_{j\in \mathcal{H}}\underbrace{\left(\begin{array}{c}
				S_{i,\delta} A_{0,j}\\
				S_{i,\delta} A_{1.j}\\
				\vdots\\
				S_{i,\delta} A_{r-1,j}
			\end{array}\right)\mathbf{f}_j}_{\mathrm{interference~ by}~\mathbf{f}_{j}}
		=\mathbf{0}.
	\end{eqnarray}
	Therefore, the optimal repair property indicates that the interference terms caused by $\mathbf{f}_{j}$ can be cancelled by the downloaded data $R_{i,\delta} \mathbf{f}_{j}$ from node $j\in \mathcal{H}$, i.e.,
	\begin{eqnarray}\label{repair node requirement2}
		\mathrm{Rank} (\left(
		\begin{array}{c}
			R_{i,\delta} \\
			S_{i,\delta} A_{t,j} \\
		\end{array}
		\right)) =\frac{N}{\delta} \textrm{~for~all~} j\in \mathcal{H} \textrm{~and~}t\in [0:r),
	\end{eqnarray}
	which implies that there exists an $\frac{N}{\delta}\times \frac{N}{\delta}$ matrix $\tilde{A}_{t,j,i,\delta}$ such that
	\begin{eqnarray}\label{repair node requirement3}
		S_{i,\delta}A_{t,j}=\tilde{A}_{t,j,i,\delta}R_{i,\delta}  \textrm{~for~}  j\in [0:n)\backslash \{i\} \textrm{~and~}t\in [0:r)
	\end{eqnarray}
	since $\mathcal{H}$ is arbitrary. Let $\mathcal{D}=\{j_0,j_1,\cdots,j_{r-\delta-1}\}$, by substituting  \eqref{repair node requirement3} into \eqref{Eqn linear equations}, together with the data $R_{i,\delta}\mathbf{f}_j$ downloaded from each helper node $j\in \mathcal{H}$,  \eqref{Eqn linear equations} can be reduced to
	\begin{eqnarray}\label{Eqn linear equations eq 1}
		\left(\begin{array}{cccc}
			S_{i,\delta} A_{0,i} & \tilde{A}_{0,j_0,i,\delta} & \cdots & \tilde{A}_{0,j_{r-\delta-1},i,\delta}\\
			S_{i,\delta} A_{1,i}  & \tilde{A}_{1,j_0,i,\delta} & \cdots & \tilde{A}_{1,j_{r-\delta-1},i,\delta}\\
			\vdots & \vdots & \vdots & \vdots\\
			S_{i,\delta} A_{r-1,i}  & \tilde{A}_{r-1,j_0,i,\delta} & \cdots & \tilde{A}_{r-1,j_{r-\delta-1},i,\delta}
		\end{array}\right)\left(\begin{array}{c}
			\mathbf{f}_i\\
			R_{i,\delta}\mathbf{f}_{j_0}\\
			\vdots\\
			R_{i,\delta}\mathbf{f}_{j_{r-\delta-1}}
		\end{array}\right)=*,
	\end{eqnarray}
	where the symbol $*$ indicates a known vector in an LSE throughout this paper. It is clear that there are $N+\frac{N}{\delta}|\mathcal{D}|=N+\frac{N}{\delta}(r-\delta)=\frac{rN}{\delta}$ unknown variables with $\frac{rN}{\delta}$ equations in \eqref{Eqn linear equations}, so is that in \eqref{Eqn linear equations eq 1}. Then we have the following result.
	\begin{Lemma}\label{lem the requirement of obtaining fi}
		For given $\delta\in [2:r]$ and $i\in [0:n)$, if node $i$ has the $\delta$-optimal repair property, then the $\frac{rN}{\delta}\times \frac{rN}{\delta}$ coefficient matrix of \eqref{Eqn linear equations eq 1} is nonsingular for any $r-\delta$ subset $\{j_0,j_1,\cdots,j_{r-\delta-1}\}$ of $[0:n)\backslash\{i\}$.
	\end{Lemma}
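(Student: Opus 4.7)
The plan is to argue by contradiction. Assume that, for some $r-\delta$ subset $\mathcal{D}=\{j_0,j_1,\ldots,j_{r-\delta-1}\}\subseteq[0:n)\backslash\{i\}$, the coefficient matrix $M$ of \eqref{Eqn linear equations eq 1} is singular; the goal is then to derive that the $\delta$-optimal repair of node $i$ cannot hold for the helper set $\mathcal{H}=[0:n)\backslash(\mathcal{D}\cup\{i\})$. Concretely, I will convert any nonzero kernel vector of $M$ into two codewords that produce identical downloaded symbols on $\mathcal{H}$ but differ at node $i$.

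First, pick a nonzero $\mathbf{v}=(\mathbf{f}_i^*,\mathbf{u}_0^*,\ldots,\mathbf{u}_{r-\delta-1}^*)\in\ker(M)$. Because the $\delta$-repair matrix $R_{i,\delta}$ has full row rank $\frac{N}{\delta}$, it is surjective, so each $\mathbf{u}_l^*$ admits a lift $\mathbf{f}_{j_l}^*\in\mathbb{F}_q^N$ with $R_{i,\delta}\mathbf{f}_{j_l}^*=\mathbf{u}_l^*$. Substituting \eqref{repair node requirement3} into $M\mathbf{v}=\mathbf{0}$ yields the compatibility identities $S_{i,\delta}\bigl(A_{t,i}\mathbf{f}_i^*+\sum_{l}A_{t,j_l}\mathbf{f}_{j_l}^*\bigr)=\mathbf{0}$ for every $t\in[0:r)$.

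Next, I invoke the MDS property: only $r-\delta+1$ coordinates, those at $\{i\}\cup\mathcal{D}$, have been specified, so the set of codeword completions is an affine subspace of the $kN$-dimensional codeword space of positive dimension $(k-r+\delta-1)N$. Inside this family, using the compatibility identities together with \eqref{repair node requirement3}, I would exhibit a codeword $(\mathbf{f}_0^*,\ldots,\mathbf{f}_{n-1}^*)$ whose helper coordinates further satisfy $R_{i,\delta}\mathbf{f}_j^*=\mathbf{0}$ for every $j\in\mathcal{H}$. Whenever $\mathbf{f}_i^*\ne\mathbf{0}$, this codeword together with the zero codeword share identical helper repair data while disagreeing at node $i$, giving the desired contradiction to the $\delta$-optimal repair property. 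The residual case $\mathbf{f}_i^*=\mathbf{0}$ (forcing some $\mathbf{u}_l^*\ne\mathbf{0}$) would be reduced to the previous one by swapping such a $j_l$ with a suitable $j'\in\mathcal{H}$, producing a new singular matrix $M'$ whose kernel, via \eqref{repair node requirement3}, has a nonzero first block.

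I expect the principal technical difficulty to be the codeword-extension step above: showing that the compatibility identities derived from $M\mathbf{v}=\mathbf{0}$, in combination with MDS, are exactly what is needed to locate inside the affine family of completions a codeword obeying the $(k+\delta-1)\cdot\frac{N}{\delta}$ additional zero-access constraints $R_{i,\delta}\mathbf{f}_j^*=\mathbf{0}$ on the helpers. The requisite dimension count balances the $(r-\delta+1)N$ specified coordinates, the helper constraints, and the $kN$-dimensional codeword space, and relies crucially on \eqref{repair node requirement3} to ensure that the two kinds of constraints interact consistently.
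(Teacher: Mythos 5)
Your route is very different from the paper's, which essentially contains no construction at all: there, Lemma~\ref{lem the requirement of obtaining fi} is read off from the repair framework itself --- after substituting \eqref{repair node requirement3} and the downloaded data, \eqref{Eqn linear equations eq 1} is a square system of $\frac{rN}{\delta}$ equations in the $\frac{rN}{\delta}$ unknowns $\mathbf{f}_i$ and $R_{i,\delta}\mathbf{f}_{j_l}$, and the $\delta$-optimal repair of node $i$ within this select-matrix framework is taken to mean that this system pins those unknowns down, i.e., the coefficient matrix is nonsingular. Judged as a free-standing argument, your sketch has a genuine gap at its crux: the codeword-extension step is announced (``I would exhibit\ldots'') but never carried out. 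The identities $S_{i,\delta}\bigl(A_{t,i}\mathbf{f}_i^*+\sum_l A_{t,j_l}\mathbf{f}_{j_l}^*\bigr)=\mathbf{0}$ are indeed \emph{necessary} for a codeword with the prescribed values on $\{i\}\cup\mathcal{D}$ and with $R_{i,\delta}\mathbf{f}_j^*=\mathbf{0}$ on all $d=k+\delta-1$ helpers to exist, but \emph{sufficiency} is a surjectivity/rank statement about the map sending a codeword to its restriction to $\{i\}\cup\mathcal{D}$ together with the helper projections; neither the MDS property alone nor your dimension count (which compares $(k-r+\delta-1)N$ free dimensions with $(k+\delta-1)\frac{N}{\delta}$ extra constraints and, even when the inequality holds, only bounds a homogeneous solution space rather than proving the inhomogeneous constraint set nonempty) delivers it. That missing step is precisely the mathematical content a proof along these lines would have to supply.

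The residual case $\mathbf{f}_i^*=\mathbf{0}$ is also not handled. A kernel vector supported on the intermediate blocks gives $\sum_l\tilde{A}_{t,j_l,i,\delta}\mathbf{u}_l^*=\mathbf{0}$ for all $t$, but this says nothing about the coefficient matrix $M'$ attached to a different deleted set $(\mathcal{D}\setminus\{j_l\})\cup\{j'\}$: neither singularity of $M'$ nor a nonzero first block in its kernel follows from your swap, so the promised reduction does not go through. Worse, in this case the fooling argument itself has no force, because the two codewords you would compare agree at node $i$ (both store $\mathbf{0}$ there), so no violation of the $\delta$-optimal repair of node $i$ can be extracted; yet excluding exactly these kernel vectors is the extra strength of the lemma beyond mere recoverability of $\mathbf{f}_i$, and it is the part the paper actually uses later (step 1) of the proof of Theorem~\ref{Thr repair property for RN} recovers the intermediate data $R_{i,\delta}\mathbf{g}_0^{(b)}$ and $R_{i,\delta}\mathbf{g}_1^{(b)}$ as well, not just $\mathbf{f}_{i-2}^{(b)}$). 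As written, your approach cannot reach that part of the statement.
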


	\section{A Generic Transformation for MDS Codes}\label{sec generic}
	In this section, we propose a method that can transform an $(n',k')$ MDS code to a new $(n=n'-\delta, k=k'-\delta)$ MDS code with the $\delta$-optimal access property for an arbitrary set of two nodes, while maintaining the $\delta$-optimal access property  of  the remaining  nodes  if  the original ones have. Specially, the two nodes which we wish to endow with the $\delta$-optimal access property in the desired code are called the \textit{goal nodes}, while the other nodes are named the \textit{remainder nodes}. Without loss of generality, we always assume that the first two nodes are the goal nodes unless
	otherwise stated.

	\subsection{The Generic Transformation} 
	Let $n',k'$ be two positive integers with $r=n'-k'$ and $2\le \delta\le r$.
	Initialize  $\mathcal{C}_0$ as an $(n',k')$ MDS code with sub-packetization level $N$. Denote by $(A_{t,i})_{t\in[0:r), i\in[0:n')}$  its parity-check matrix. 
	
	The generic transformation is then carried out through the following three steps.\\
	\textbf{\textit{Step 1. Obtain an $(n=n'-\delta, k=k'-\delta)$ MDS code $\mathcal{C}_1$ by deleting the last $\delta$ nodes of base code $\mathcal{C}_0$.}}\\
	\textbf{\textit{Step 2: Generate an intermediate $(n,k)$ MDS code $\mathcal{C}_2$ by space sharing $\delta$ instances of code $\mathcal{C}_1$.}}
	
	Let $(\mathbf{g}_0^{(a)},\mathbf{g}_1^{(a)},\mathbf{f}_0^{(a)},\cdots,\mathbf{f}_{k-2}^{(a)},\mathbf{f}_{k-1}'^{(a)},\cdots,\mathbf{f}_{n-2}'^{(a)})$ and $(\mathbf{g}_0^{(b)},\mathbf{g}_1^{(b)},\mathbf{f}_0^{(b)},\cdots,\mathbf{f}_{k-2}^{(b)},\mathbf{f}_{k-1}^{(b)},\cdots,\mathbf{f}_{n-2}^{(b)})$ respectively be the stored data of instances $a\in [0:2)$ and $b\in [2:\delta)$ of code $\mathcal{C}_1$. Let $\mathbf{G}_{t,i}$, $\mathbf{F}_{t,j}$ denote the PCG-data of the respective nodes, i.e.,
	\begin{eqnarray*}
		\mathbf{G}_{t,i}=\left(\begin{array}{c}
			\mathbf{G}_{t,i}^{(0)}\\
			\vdots\\
			\mathbf{G}_{t,i}^{(\delta-1)}
		\end{array}\right),\,\,\mathbf{F}_{t,j}=\left(\begin{array}{c}
			\mathbf{F}_{t,j}^{(0)}\\
			\vdots\\
			\mathbf{F}_{t,j}^{(\delta-1)}
		\end{array}\right)\textrm{ for } t\in [0:r),
	\end{eqnarray*}
	where  $\mathbf{G}_{t,i}^{(a)}$ and $\mathbf{F}_{t,j}^{(a)}$ respectively denote the PCG-data of goal node $i\in[0:2)$ and remainder node $j\in [0:n-2)$ in instance $a\in [0:\delta)$, and are column vectors of length $N$ defined by
	\begin{eqnarray*}
		\mathbf{G}_{t,i}^{(a)}=A_{t,i}\mathbf{g}_i^{(a)}, & \mathbf{F}_{t,j}^{(a)}=\left\{\begin{array}{ll}
			A_{t,j+2}\mathbf{f}_j'^{(a)}, & \textrm{if }j\in [k-2:n-2) \textrm{ and } a\in [0:2),\\
			A_{t,j+2}\mathbf{f}_j^{(a)}, & \textrm{otherwise},
		\end{array}\right. 
	\end{eqnarray*}
	\textbf{\textit{Step 3: Construct the desired code $\mathcal{C}_3$  by modifying the PCG-data in the goal nodes of $\mathcal{C}_2$.}}
	
	The change of PCG-data leads to new parity-check equations, which means that the data in some $r$ nodes will be changed. By convention, we assume that the data $\mathbf{f}_j'^{(l)}$ stored at remainder node $j\in [k-2:n-2)$ of instance $l\in [0:2)$ is modified to $\mathbf{f}_j^{(l)}$ and the data at the other nodes are unchanged.	
	
	Figure \ref{fig t-PCG of (n,k) Code C3} depicts the structure of the new code $\mathcal{C}_3$. Denote $\mathbf{G}_{t,i}'$ and $\mathbf{F}_{t,j}'$ the PCG-data stored at goal node $i$ and remainder node $j$ of code $\mathcal{C}_3$, where $i\in [0:2)$ and $j\in [k-2:n-2)$. For convenience, we write $\mathbf{G}_{t,i}'$ and $\mathbf{F}_{t,j}'$ as
	\begin{eqnarray*}
		\mathbf{G}_{t,i}'=\left(\begin{array}{c}
			\mathbf{G}_{t,j}'^{(0)}\\
			\vdots\\
			\mathbf{G}_{t,i}'^{(\delta-1)}
		\end{array}\right),i\in [0:2),~\mathbf{F}_{t,j}'=\left(\begin{array}{c}
			\mathbf{F}_{t,j}'^{(0)}\\
			\vdots\\
			\mathbf{F}_{t,j}'^{(\delta-1)}
		\end{array}\right),j\in [k-2:n-2)
	\end{eqnarray*}
	where $\mathbf{G}_{t,i}'^{(a)}$ and $\mathbf{F}_{t,j}'^{(a)}~(a\in [0:\delta))$ are column vectors of length $N$ defined by
	\begin{eqnarray*}
		\mathbf{G}_{t,i}'^{(a)}=\left\{\begin{array}{ll}
			\mathbf{G}_{t,i}^{(a)}-\sum\limits_{u=0,u\ne i}^{\delta-1}A_{t,n+u}\mathbf{g}_i^{(u)}, & \textrm{if~} i=a, \\
			A_{t,n+a}\mathbf{g}_i^{(a)}, & \textrm{if } 0\le i \ne a<2,\\
			\mathbf{G}_{t,i}^{(a)}, & \textrm{otherwise,}
		\end{array}\right.
	\end{eqnarray*}
	and
	\begin{eqnarray*}
		\mathbf{F}_{t,j}'^{(a)}=\left\{\begin{array}{ll}
			A_{t,j+2}\mathbf{f}_j^{(a)}, & \textrm{if } a\in [0:2),\\
			\mathbf{F}_{t,j}^{(a)}, & \textrm{otherwise}.
		\end{array}\right.
	\end{eqnarray*}
	
	\begin{figure}[!htbp]
		\tikzset{
			source1/.style={draw,thick,rounded corners,align=center,inner sep=.1cm,minimum width=1.5cm,minimum height=1.4cm},
			source2/.style={draw,thick,rounded corners,align=center,inner sep=.1cm,minimum width=5.6cm,minimum height=1.4cm},
			global scale/.style={scale=#1,every node/.append style={scale=#1}}
		}
		\centering
		\begin{tikzpicture}[global scale=0.85]
			\node [source2] (GN 0) at (0,0.1)  {\small PCG-data of GN 0 ($\mathbf{G}_{t,0}'$)};
			\node [source2] (GN 2) at (6.3,0.1)  {\small PCG-data of GN 1 ($\mathbf{G}_{t,1}'$)};
			\node [source1] (RN 0) at (10.8,0.1)  {\small PCG-data of\\ \small RN 0};
			\node  at (12.55,0.1)  {$\cdots$};
			\node [source1] at (14.2,0.1) {\small PCG-data of\\ \small RN $n-3$};
			\node at (0,-1.4)  {\small$A_{t,0}\mathbf{g}_0^{(0)}\hspace{-1mm}-\hspace{-1mm}\dashuline{A_{t,n+1}\mathbf{g}_0^{(1)}}\hspace{-1mm}-\hspace{-1mm}\uline{\sum\limits_{u=2}^{\delta-1}A_{t,n+u}\mathbf{g}_0^{(u)}}$};
			\node at (3.15,-1.4) {$+$};
			\node at (6.3,-1.4)  {\small$A_{t,n}\mathbf{g}_1^{(0)}$};
			\node at (9.45,-1.4) {$+$};
			\node at (10.8,-1.4)  {\small$A_{t,2}\mathbf{f}_0^{(0)}$};
			\node at (12.1,-1.4) {$+$};
			\node at (12.55,-1.4) {$\cdots$};
			\node at (13.0,-1.4) {$+$};
			\node at (14.2,-1.4) {\small$A_{t,n-1}\mathbf{f}_{n-3}^{(0)}$};
			\node at (15.7,-1.4)  {$=\mathbf{0}$};
			\node at (0,-2.3)  {\small$A_{t,n+1}\mathbf{g}_0^{(1)}$};
			\node at (3.15,-2.3) {$+$};
			\node at (6.3,-2.3)  {\small$A_{t,1}\mathbf{g}_1^{(1)}-\hspace{-1mm}\dashuline{A_{t,n}\mathbf{g}_1^{(0)}}\hspace{-1mm}-\hspace{-1mm}\uline{\sum\limits_{u=2}^{\delta-1}A_{t,n+u}\mathbf{g}_1^{(u)}}$};
			\node at (9.45,-2.3) {$+$};
			\node at (10.8,-2.3)  {\small$A_{t,2}\mathbf{f}_0^{(1)}$};
			\node at (12.1,-2.3) {$+$};
			\node at (12.55,-2.3) {$\cdots$};
			\node at (13.0,-2.3) {$+$};
			\node at (14.2,-2.3) {\small$A_{t,n-1}\mathbf{f}_{n-3}^{(1)}$};
			\node at (15.7,-2.3)  {$=\mathbf{0}$};
			\node at (0,-3.2)  {\small$A_{t,0}\mathbf{g}_0^{(2)}$};
			\node at (3.15,-3.2) {$+$};
			\node at (6.3,-3.2)  {\small$A_{t,1}\mathbf{g}_1^{(2)}$};
			\node at (9.45,-3.2) {$+$};
			\node at (10.8,-3.2)  {\small$A_{t,2}\mathbf{f}_0^{(2)}$};
			\node at (12.1,-3.2) {$+$};
			\node at (12.55,-3.2) {$\cdots$};
			\node at (13.0,-3.2) {$+$};
			\node at (14.2,-3.2) {\small$A_{t,n-1}\mathbf{f}_{n-3}^{(2)}$};
			\node at (15.7,-3.2)  {$=\mathbf{0}$};
			\node at (0,-3.9)  {$\vdots$};
			\node at (3.15,-3.9) {$\vdots$};
			\node at (6.3,-3.9)  {$\vdots$};
			\node at (9.45,-3.9) {$\vdots$};
			\node at (10.8,-3.9)  {$\vdots$};
			\node at (12.1,-3.9) {$\vdots$};
			\node at (12.55,-4.05) {$\cdots$};
			\node at (13.0,-3.9) {$\vdots$};
			\node at (14.2,-3.9) {$\vdots$};
			\node at (15.55,-3.9)  {$\vdots$};
			\node at (0,-4.6)  {\small$A_{t,0}\mathbf{g}_0^{(\delta-1)}$};
			\node at (3.15,-4.6) {$+$};
			\node at (6.3,-4.6)  {\small$A_{t,1}\mathbf{g}_1^{(\delta-1)}$};
			\node at (9.45,-4.6) {$+$};
			\node at (10.8,-4.6)  {\small$A_{t,2}\mathbf{f}_0^{(\delta-1)}$};
			\node at (12.1,-4.6) {$+$};
			\node at (12.55,-4.6) {$\cdots$};
			\node at (13.0,-4.6) {$+$};
			\node at (14.2,-4.6) {\small$A_{t,n-1}\mathbf{f}_{n-3}^{(\delta-1)}$};
			\node at (15.7,-4.6)  {$=\mathbf{0}$};
			
			\draw[dashed] (-2.8,-0.85) rectangle (2.8,-5.0);
			\draw[dashed] (3.5,-0.85) rectangle (9.1,-5.0);
			\draw[dashed] (9.8,-0.85) rectangle (11.8,-5.0);
			\draw[dashed] (13.2,-0.85) rectangle (15.2,-5.0);
		\end{tikzpicture}
		\captionsetup{justification=centering}
		\caption{The $t$-th PCG of new $(n,k)$ MDS code $\mathcal{C}_3$ for $t\in [0:r)$, where GN and RN respectively denote goal node and remainder node.}
		\label{fig t-PCG of (n,k) Code C3}
	\end{figure}

	Let us first look at an example of the transformation.	
	\begin{Example}
		We choose a  known $(15,11)$ MDS code $\mathcal{C}_0$ with sub-packetization level $N$ as the base code, whose parity-check matrix is $(A_{t,i})_{t\in [0: 4), i \in [0:15)}$. Let $\delta=3$, applying the generic transformation, we can get the desired $(12,8)$ MDS code $\mathcal{C}_3$ with sub-packetization level $3N$, whose $t$-th PCG is given in Figure \ref{fig t-PCG of (12,8) Code C2}.
		\begin{figure}[!htbp]
			\tikzset{
				source1/.style={draw,thick,rounded corners,align=center,inner sep=.1cm,minimum width=1.5cm,minimum height=1.4cm},
				source2/.style={draw,thick,rounded corners,align=center,inner sep=.1cm,minimum width=5.0cm,minimum height=1.4cm},
				global scale/.style={scale=#1,every node/.append style={scale=#1}}
			}
			\centering
			\begin{tikzpicture}[global scale=0.85]			
				\node [source2] (GN 0) at (0,0)  {PCG-data of GN 0};
				\node [source2] (GN 2) at (5.6,0)  {PCG-data of GN 1};
				\node [source1] (RN 0) at (9.6,0)  {PCG-data\\ of RN 0};
				\node [source1] (RN 0) at (12,0)  {PCG-data\\ of RN 1};
				\node at (13.7,0) {$\cdots$};
				\node [source1] (RN 6) at (15.3,0)  {PCG-data\\ of RN 9};
				\node at (0,-1.3)  {$A_{t,0}\mathbf{g}_0^{(0)}\hspace{-1mm}-\hspace{-1mm}A_{t,13}\mathbf{g}_0^{(1)}\hspace{-1mm}-\hspace{-1mm}A_{t,14}\mathbf{g}_0^{(2)}$};
				\node at (2.8,-1.3) {$+$};
				\node at (5.6,-1.3)  {$A_{t,12}\mathbf{g}_1^{(0)}$};
				\node at (8.4,-1.3) {$+$};
				\node at (9.6,-1.3)  {$A_{t,2}\mathbf{f}_0^{(0)}$};
				\node at (10.8,-1.3) {$+$};
				\node at (12.0,-1.3) {$A_{t,3}\mathbf{f}_1^{(0)}$};
				\node at (13.2,-1.3) {$+$};
				\node at (13.7,-1.3) {$\cdots$};
				\node at (14.2,-1.3) {$+$};
				\node at (15.3,-1.3) {$A_{t,11}\mathbf{f}_9^{(0)}$};
				\node at (16.7,-1.3)  {$=\mathbf{0}$};
				\node at (0,-2.0)  {$A_{t,13}\mathbf{g}_0^{(1)}$};
				\node at (2.8,-2.0) {$+$};
				\node at (5.6,-2.0)  {$A_{t,1}\mathbf{g}_1^{(1)}\hspace{-1mm}-\hspace{-1mm}A_{t,12}\mathbf{g}_1^{(0)}\hspace{-1mm}-\hspace{-1mm}A_{t,14}\mathbf{g}_1^{(2)}$};
				\node at (8.4,-2.0) {$+$};
				\node at (9.6,-2.0)  {$A_{t,2}\mathbf{f}_0^{(1)}$};
				\node at (10.8,-2.0) {$+$};
				\node at (12.0,-2.0) {$A_{t,3}\mathbf{f}_1^{(1)}$};
				\node at (13.2,-2.0) {$+$};
				\node at (13.7,-2.0) {$\cdots$};
				\node at (14.2,-2.0) {$+$};
				\node at (15.3,-2.0) {$A_{t,11}\mathbf{f}_9^{(1)}$};
				\node at (16.7,-2.0)  {$=\mathbf{0}$};
				\node at (0,-2.7)  {$A_{t,0}\mathbf{g}_0^{(2)}$};
				\node at (2.8,-2.7) {$+$};
				\node at (5.6,-2.7)  {$A_{t,1}\mathbf{g}_1^{(2)}$};
				\node at (8.4,-2.7) {$+$};
				\node at (9.6,-2.7)  {$A_{t,2}\mathbf{f}_0^{(2)}$};
				\node at (10.8,-2.7) {$+$};
				\node at (12.0,-2.7) {$A_{t,3}\mathbf{f}_1^{(2)}$};
				\node at (13.2,-2.7) {$+$};
				\node at (13.7,-2.7) {$\cdots$};
				\node at (14.2,-2.7) {$+$};
				\node at (15.3,-2.7) {$A_{t,11}\mathbf{f}_9^{(2)}$};
				\node at (16.7,-2.7)  {$=\mathbf{0}$};		
				\draw[dashed] (-2.5,-0.95) rectangle (2.5,-3.1);
				\draw[dashed] (3.1,-0.95) rectangle (8.1,-3.1);
				\draw[dashed] (8.7,-0.95) rectangle (10.5,-3.1);
				\draw[dashed] (11.1,-0.95) rectangle (12.9,-3.1);
				\draw[dashed] (14.4,-0.95) rectangle (16.2,-3.1);
			\end{tikzpicture}
			\captionsetup{justification=centering}
			\caption{The $t$-th PCG of $(12,8)$ MDS code $\mathcal{C}_3$, where $t\in [0:4)$.}
			\label{fig t-PCG of (12,8) Code C2}
		\end{figure}
		
		Hereafter, we check that the goal nodes of the $(12,8)$ MDS code $\mathcal{C}_3$ have the $3$-optimal access property, where the verification of the MDS property and repair property of remainder nodes can be referred to Theorems \ref{Thr MDS property} and \ref{Thr repair property for RN}, respectively. Let us focus on the repair of the goal node 0 by connecting $d=10$ helper nodes, say remainder nodes 0 to $9$. To this end, the data 
		\begin{eqnarray}\label{eqn the downloaded data for repair GNs}
			(\mathbf{f}_0^{(0)},\mathbf{f}_1^{(0)},\cdots,\mathbf{f}_9^{(0)})
		\end{eqnarray}
		are downloaded, then the PCG-data of remainder nodes 0 to 9 in the first LSE of Figure \ref{fig t-PCG of (12,8) Code C2} are known. Thus, we have
		\begin{eqnarray*}
			A_{t,0}\mathbf{g}_0^{(0)}-A_{t,13}\mathbf{g}_0^{(1)}- A_{t,14}\mathbf{g}_0^{(2)}+A_{t,12}\mathbf{g}_1^{(0)}=*,&t\in [0:4)
		\end{eqnarray*}
		which can solve the data $\mathbf{g}_0^{(0)},\mathbf{g}_0^{(1)}$, and $\mathbf{g}_0^{(2)}$ that are stored at goal node 0, by means of the MDS property of base code $\mathcal{C}_0$. Note that $\gamma(d)=10N=\frac{10}{3}\times 3N$ according to \eqref{eqn the downloaded data for repair GNs}, which achieves the lower bound in \eqref{Eqn_bound_on_gamma} since $d=10$ and the sub-packetization level of the $(12,8)$ code $\mathcal{C}_3$ is $3N$. Moreover, the amount of accessed data is also $10N$ by \eqref{eqn the downloaded data for repair GNs}, which also attains the lower bound in \eqref{Eqn_bound_on_gamma}. Thus we can conclude that goal node 0 of the new $(12,8)$ MDS code $\mathcal{C}_3$ has the $3$-optimal access property.
	\end{Example}
	
	\subsection{MDS Property and Repair Property for Desired Code}
	In this subsection, we check the MDS property, the $\delta$-optimal access property of goal nodes and the repair property of remainder nodes of new $(n,k)$ MDS code $\mathcal{C}_3$.
	
	Recall that  reconstructing  the  original data or repairing  a node, we need to connect $k$ or $d$ nodes. Throughout this subsection, let $\mathcal{D}_0\subset [0:2)$ and $\mathcal{D}_1=\{j_0-2,j_1-2,\cdots,j_{s-1}-2\}\subset [0: n-2)$ respectively be the sets of indices of goal nodes and remainder nodes that are not connected,	 where $|\mathcal{D}_0|+|\mathcal{D}_1|=r$
	for the reconstruction process and $|\mathcal{D}_0|+|\mathcal{D}_1|=n-1-d$ for the repair process, respectively.

	\begin{Theorem}\label{Thr MDS property}
		The new $(n,k)$ MDS code $\mathcal{C}_3$ has the MDS property.
	\end{Theorem}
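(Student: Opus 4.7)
The plan is to reduce MDS decoding in $\mathcal{C}_3$ to the MDS property of the base code $\mathcal{C}_0$, namely that any $r\times r$ block sub-matrix of $A=(A_{t,i})_{t\in[0:r),\,i\in[0:n')}$ is nonsingular (Subsection \ref{section sec2-1}). Fix an erasure pattern with $\mathcal{D}_0\subseteq[0:2)$ erased goal nodes and $\mathcal{D}_1\subseteq[0:n-2)$ erased remainder nodes, $|\mathcal{D}_0|+|\mathcal{D}_1|=r$. The first move is to peel off the upper instances: from Figure \ref{fig t-PCG of (n,k) Code C3}, for each $a\in[2:\delta)$ the $t$-th PCG equation reads $A_{t,0}\mathbf{g}_0^{(a)}+A_{t,1}\mathbf{g}_1^{(a)}+\sum_{j=0}^{n-3}A_{t,j+2}\mathbf{f}_j^{(a)}=\mathbf{0}$, which is exactly a parity-check equation of the shortened code $\mathcal{C}_1$. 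Since any $r\times r$ sub-block of the first $n$ block-columns of $A$ is itself a sub-block of $A$, $\mathcal{C}_1$ inherits the MDS property, so every $\mathbf{g}_i^{(a)}$ and $\mathbf{f}_j^{(a)}$ at levels $a\in[2:\delta)$ with $i\in\mathcal{D}_0$, $j\in\mathcal{D}_1$ is immediately recovered. Consequently the correction sums $\sum_{u=2}^{\delta-1}A_{t,n+u}\mathbf{g}_i^{(u)}$ appearing in the $a=0,1$ equations become known constants that can be moved to the right-hand side.

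What remains is the level-$0$ and level-$1$ data, which I split by $|\mathcal{D}_0|$. If $|\mathcal{D}_0|=0$, the two equations decouple and each becomes an MDS-type system in the $r$ erased $\mathbf{f}$-vectors with column indices $\{j+2:j\in\mathcal{D}_1\}\subseteq[0:n)$, directly solvable by MDS of $\mathcal{C}_0$. If $|\mathcal{D}_0|=1$, say $\mathcal{D}_0=\{0\}$, the level-$1$ equation (all $\mathbf{g}_1$-values being known) is an MDS-type system in the $r$ unknowns $\mathbf{g}_0^{(1)}$ and $\{\mathbf{f}_j^{(1)}\}_{j\in\mathcal{D}_1}$ with column indices $\{n+1\}\cup\{j+2:j\in\mathcal{D}_1\}\subseteq[0:n')$; substituting the recovered $\mathbf{g}_0^{(1)}$ into the level-$0$ equation turns it into another MDS-type system in $\mathbf{g}_0^{(0)}$ and $\{\mathbf{f}_j^{(0)}\}$ with columns $\{0\}\cup\{j+2:j\in\mathcal{D}_1\}$. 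The case $\mathcal{D}_0=\{1\}$ is entirely symmetric.

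The expected main difficulty is the coupled case $\mathcal{D}_0=\{0,1\}$, in which both level equations share the cross variables $\mathbf{g}_0^{(1)}$ and $\mathbf{g}_1^{(0)}$. The key observation is that by construction these variables appear with opposite signs in the two equations: $-A_{t,n+1}\mathbf{g}_0^{(1)}$ versus $+A_{t,n+1}\mathbf{g}_0^{(1)}$, and $+A_{t,n}\mathbf{g}_1^{(0)}$ versus $-A_{t,n}\mathbf{g}_1^{(0)}$. Adding the two equations cancels both cross terms and yields $r$ equations in the $r$ unknowns $\mathbf{g}_0^{(0)},\mathbf{g}_1^{(1)},\{\mathbf{f}_j^{(0)}+\mathbf{f}_j^{(1)}\}_{j\in\mathcal{D}_1}$ with column indices $\{0,1\}\cup\{j+2:j\in\mathcal{D}_1\}\subseteq[0:n)$, solvable by MDS of $\mathcal{C}_0$. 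Plugging the recovered $\mathbf{g}_0^{(0)}$ back into the level-$0$ equation leaves $r$ equations in $\mathbf{g}_0^{(1)},\mathbf{g}_1^{(0)},\{\mathbf{f}_j^{(0)}\}_{j\in\mathcal{D}_1}$ whose column indices $\{n+1,n\}\cup\{j+2:j\in\mathcal{D}_1\}$ are again $r$ distinct elements of $[0:n')$, hence nonsingular; and the remaining $\mathbf{f}_j^{(1)}$ follow by subtraction. This add-then-subtract decoupling is the one place where the specific opposite-sign convention built into Step 3 does real work, and it is the only genuine obstacle in the proof.
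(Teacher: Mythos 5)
Your proposal is correct and follows essentially the same route as the paper: the same case split on $|\mathcal{D}_0|\in\{0,1,2\}$, first recovering the upper instances via the MDS property of the base code, and in the doubly-erased-goal-node case summing the two remaining LSEs to cancel the cross terms $A_{t,n+1}\mathbf{g}_0^{(1)}$ and $A_{t,n}\mathbf{g}_1^{(0)}$, then back-substituting into the level-$0$ equation and finishing by subtraction. The only cosmetic differences are that you make explicit that the shortened code $\mathcal{C}_1$ inherits the MDS property and that you recover level $1$ separately rather than together with levels $2,\dots,\delta-1$; the underlying nonsingularity argument (distinct block columns of $A$) is identical.
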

	\begin{proof}
		It is sufficient to show that the data stored at any $r$ out of the $n$ nodes can be recovered by the data stored at the remaining $k$ nodes.  By the data stored at goal node $j\in [0:2)\setminus\mathcal{D}_0$  and remainder node $l\in [0:n-2)\setminus\mathcal{D}_1$, one can then compute the corresponding PCG-data  in Figure \ref{fig t-PCG of (n,k) Code C3}. 
		
		We discuss the MDS property in three cases. 
		\begin{itemize}
			\item [(i)] When the two goal nodes are connected, i.e., $|\mathcal{D}_0|=0$ and $s=r$, it is obvious that the data stored at the $r$ nodes can be recovered by means of the MDS property of the base code $\mathcal{C}_0$.
			\item [(ii)] When one goal node is not connected, i.e., $|\mathcal{D}_0|=1$ and $s=r-1$, without loss of generality, we assume that the first goal node is not connected, which means that the PCG-data of the second goal node are known. Then we can obtain $(\mathbf{g}_0^{(a)},\mathbf{f}_{j_0-2}^{(a)},\cdots,\mathbf{f}_{j_{r-2}-2}^{(a)})$ for all $a\in [1:\delta)$ from the last $\delta-1$ LSEs of Figure \ref{fig t-PCG of (n,k) Code C3}, due to the MDS property of the base code $\mathcal{C}_0$, from which the data marked with underline and dash underline in the first LSE of Figure \ref{fig t-PCG of (n,k) Code C3} can be cancelled. We can thus get $(\mathbf{g}_0^{(0)},\mathbf{f}_{j_0-2}^{(0)},\cdots,\mathbf{f}_{j_{r-2}-2}^{(0)})$ from the first LSE of Figure \ref{fig t-PCG of (n,k) Code C3}. That is, all the data stored at the $r$ nodes are reconstructed.
			\item [(iii)] When the two goal nodes are not connected, i.e., $|\mathcal{D}_0|=2$ and $s=r-2$, then we can first recover $(\mathbf{g}_0^{(a)},\mathbf{g}_1^{(a)},\mathbf{f}_{j_0-2}^{(a)},\cdots,\mathbf{f}_{j_{r-3}-2}^{(a)})$ for $a\in [2:\delta)$ from the last $\delta-2$ LSEs of Figure \ref{fig t-PCG of (n,k) Code C3} according to the MDS property of base code $\mathcal{C}_0$. We next compute the data that are underlined in Figure \ref{fig t-PCG of (n,k) Code C3} from the recovered data, and then get
			\begin{align}\label{eqn the second LES for the proof of MDS property of code C3}\setlength\arraycolsep{2pt}
				\left\{\begin{array}{ll}
					A_{t,0}\mathbf{g}_0^{(0)}-A_{t,n+1}\mathbf{g}_0^{(1)}+A_{t,n}\mathbf{g}_1^{(0)}+A_{t,j_0}\mathbf{f}_{j_0-2}^{(0)}+\cdots+A_{t,j_{r-3}}\mathbf{f}_{j_{r-3}-2}^{(0)}=*, & t\in [0:r)\\
					A_{t,n+1}\mathbf{g}_0^{(1)}+A_{t,1}\mathbf{g}_1^{(1)}-A_{t,n}\mathbf{g}_1^{(0)}+A_{t,j_0}\mathbf{f}_{j_0-2}^{(1)}+\cdots+A_{t,j_{r-3}}\mathbf{f}_{j_{r-3}-2}^{(1)}=*, & t\in [0:r)
				\end{array}\right.
			\end{align}
			from the first two LSEs of Figure \ref{fig t-PCG of (n,k) Code C3}. By summing the two equations of \eqref{eqn the second LES for the proof of MDS property of code C3}, we have
			\begin{eqnarray*}
				A_{t,0}\mathbf{g}_0^{(0)}+A_{t,1}\mathbf{g}_1^{(1)}+A_{t,j_0}(\mathbf{f}_{j_0-2}^{(0)}+\mathbf{f}_{j_0-2}^{(1)})+\cdots+A_{t,j_{r-3}}(\mathbf{f}_{j_{r-3}-2}^{(0)}+\mathbf{f}_{j_{r-3}-2}^{(1)})=*, & t\in [0:r),
			\end{eqnarray*}
			from which we can get 
			\begin{eqnarray}\label{eqn the result for MDS property}
				(\mathbf{g}_0^{(0)},\mathbf{g}_1^{(1)},\mathbf{f}_{j_0-2}^{(0)}+\mathbf{f}_{j_0-2}^{(1)},\cdots,\mathbf{f}_{j_{r-3}-2}^{(0)}+\mathbf{f}_{j_{r-3}-2}^{(1)}).
			\end{eqnarray}
			Then, after cancelling $\mathbf{g}_0^{(0)}$ in  the first LSE of \eqref{eqn the second LES for the proof of MDS property of code C3},  we are able to solve $$(\mathbf{g}_0^{(1)},\mathbf{g}_1^{(0)},\mathbf{f}_{j_0-2}^{(0)},\cdots,\mathbf{f}_{j_{r-3}-2}^{(0)})$$ from it, which together with \eqref{eqn the result for MDS property} shows that the data stored at all the $r$ nodes  have been recovered.
		\end{itemize}
		
		Collecting the above three cases, we then finish the proof.
	\end{proof}
	
	Next, we examine the repair property of code $\mathcal{C}_3$. 	
	
	\begin{Theorem}\label{Thr repair property for RN}
		For $i\in [2:n)$, if reminder node $i-2$ of the base code $\mathcal{C}_0$ has the $\delta$-optimal access property, then 
		the property can be maintained for remainder node $i-2$  of the new $(n,k)$ MDS code $\mathcal{C}_3$. 
	\end{Theorem}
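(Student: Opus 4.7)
The plan is to repair remainder node $i-2$ of $\mathcal{C}_3$ by re-using the base-code repair procedure of node $i$ in parallel across all $\delta$ instances. Let $R_{i,\delta}$ and $S_{i,\delta}$ denote the $\frac{N}{\delta}\times N$ repair and select matrices of node $i$ in $\mathcal{C}_0$, so that each row of $R_{i,\delta}$ has exactly one nonzero entry and $S_{i,\delta}A_{t,m}=\tilde{A}_{t,m,i,\delta}R_{i,\delta}$ holds for every $m\in[0:n')\setminus\{i\}$ and $t\in[0:r)$. I would take the $\delta$-repair matrix of remainder node $i-2$ in $\mathcal{C}_3$ to be the block-diagonal operator $I_\delta\otimes R_{i,\delta}$ acting on the length-$\delta N$ store, and instruct every helper (goal or remainder) to transmit the corresponding $\delta\cdot\frac{N}{\delta}=N$ symbols. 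This matches the lower bound in \eqref{Eqn_bound_on_gamma} at sub-packetization $\delta N$, and the single-nonzero-per-row structure of $R_{i,\delta}$ immediately gives the $\delta$-optimal access guarantee.

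For the decoding step I would multiply the $a$-th instance equation of Figure \ref{fig t-PCG of (n,k) Code C3} by $S_{i,\delta}$ for each $t\in[0:r)$ and $a\in[0:\delta)$. Every matrix $A_{t,m}$ that appears has index $m\in\{0,1,2,\ldots,n-1,n,n+1,\ldots,n+\delta-1\}\setminus\{i\}$, so the base-code cancellation identity applies uniformly. Terms coming from helper nodes become explicit functions of the downloaded symbols; terms from non-helper remainder nodes $j\in\mathcal{D}_1$ become $\tilde{A}_{t,j+2,i,\delta}(R_{i,\delta}\mathbf{f}_j^{(a)})$ with intermediate unknowns $R_{i,\delta}\mathbf{f}_j^{(a)}$; and, reading off the step-3 definition of the modified PCG-data $\mathbf{G}_{t,i'}'^{(a)}$, the non-helper goal-node contributions become linear combinations of intermediate unknowns $R_{i,\delta}\mathbf{g}_{i'}^{(u)}$ taken across $u\in[0:\delta)$. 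A direct count returns exactly $rN$ scalar equations in $rN$ scalar unknowns.

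To certify invertibility I would solve the system instance by instance in the order $a=2,3,\ldots,\delta-1$, then $a=1$, then $a=0$. For each $a\in[2:\delta)$ the modification terms $A_{t,n+u}$ do not appear in the $a$-th LSE, so after applying $S_{i,\delta}$ the LSE takes exactly the form of the base-code repair LSE \eqref{Eqn linear equations eq 1} for node $i$ with non-helper set $\mathcal{D}_0\cup\{j+2:j\in\mathcal{D}_1\}$ of cardinality $r-\delta$; Lemma \ref{lem the requirement of obtaining fi} then yields $\mathbf{f}_{i-2}^{(a)}$ together with $R_{i,\delta}\mathbf{g}_{i'}^{(a)}$ for $i'\in\mathcal{D}_0$. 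Plugging these into the $a=1$ and $a=0$ LSEs removes every $R_{i,\delta}\mathbf{g}_{0}^{(u)}$ and $R_{i,\delta}\mathbf{g}_{1}^{(u)}$ with $u\geq 2$; when $|\mathcal{D}_0|\leq 1$ what remains is an independent base-code repair LSE at each of $a=0$ and $a=1$ with exactly $r-\delta$ non-helpers.

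The hardest subcase is $\mathcal{D}_0=\{0,1\}$, where the residual $a=0$ and $a=1$ systems stay coupled through the shared intermediates $R_{i,\delta}\mathbf{g}_0^{(1)}$ and $R_{i,\delta}\mathbf{g}_1^{(0)}$. My plan here mirrors the argument used in part (iii) of the proof of Theorem \ref{Thr MDS property}: summing the reduced $a=0$ and $a=1$ LSEs cancels the pair $\{R_{i,\delta}\mathbf{g}_0^{(1)},R_{i,\delta}\mathbf{g}_1^{(0)}\}$ and produces a base-code repair LSE in the sum variables $\mathbf{f}_{i-2}^{(0)}+\mathbf{f}_{i-2}^{(1)}$, $R_{i,\delta}\mathbf{g}_0^{(0)}$, $R_{i,\delta}\mathbf{g}_1^{(1)}$, $R_{i,\delta}(\mathbf{f}_j^{(0)}+\mathbf{f}_j^{(1)})$ with non-helper set of size $r-\delta$, which is invertible by Lemma \ref{lem the requirement of obtaining fi}. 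Back-substituting $R_{i,\delta}\mathbf{g}_0^{(0)}$ into the $a=0$ LSE (and $R_{i,\delta}\mathbf{g}_1^{(1)}$ into the $a=1$ LSE) reduces each into another base-code repair LSE of the correct size, completing the repair and hence the $\delta$-optimal access verification for remainder node $i-2$.
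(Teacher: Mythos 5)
Your proposal is correct and follows essentially the same route as the paper: download $R_{i,\delta}$-projections from all helpers, solve the unmodified instances $a\in[2:\delta)$ via Lemma \ref{lem the requirement of obtaining fi}, multiply the first two LSEs by $S_{i,\delta}$ and use \eqref{repair node requirement3}, and in the case $\mathcal{D}_0=\{0,1\}$ sum the two reduced systems exactly as in case (iii) of Theorem \ref{Thr MDS property} before back-substituting. The only nitpick is the $|\mathcal{D}_0|=1$ subcase: the two remaining LSEs are not independent (they share the cross-term $R_{i,\delta}\mathbf{g}_l^{(\cdot)}$ of the unconnected goal node), so the solving order between $a=0$ and $a=1$ must be chosen according to which goal node is missing, a symmetric adjustment that does not affect the argument.
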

	\begin{proof}
		Assume remainder node $i-2$ (i.e., node $i\in [2:n)$) of the base code $\mathcal{C}_0$ has the $\delta$-optimal access property, where  its $\delta$-repair matrix and $\delta$-select matrix are  respectively $\frac{N}{\delta}\times N$ full-rank matrices $R_{i,\delta}$ and $S_{i,\delta}$, i.e., the following data
		\begin{eqnarray}\label{eqn the downloaded data of repair RNs}
			\{R_{i,\delta}\mathbf{g}_l^{(a)}|l\in [0:2)\backslash \mathcal{D}_0, a\in [0:\delta)\} \textrm{ and }\{R_{i,\delta}\mathbf{f}_{j}^{(a)}|j \in [0:n-2)\backslash (\mathcal{D}_1\cup\{i-2\}),a\in [0:\delta)\}
		\end{eqnarray}
		are downloaded to repair remainder node $i-2$. Similar to the proof of Theorem \ref{Thr MDS property}, the repair of remainder node $i-2$ of the new $(n,k)$ MDS code $\mathcal{C}_3$ is classified in three cases according to $|\mathcal{D}_0|=0,1$, and $2$. Herein we only investigate the case $|\mathcal{D}_0|=2$, i.e., $\mathcal{D}_0=[0:2)$ and $|\mathcal{D}_1|=r -\delta-2$, since the other cases can be verified in a similar manner. In the case of  $|\mathcal{D}_0|=2$, the repair process of remainder node $i-2$   consists of the following three steps.
		\begin{itemize}
			\item [1)] Since $(\mathbf{g}_0^{(b)},\mathbf{g}_1^{(b)},\mathbf{f}_0^{(b)},\mathbf{f}_1^{(b)},\cdots,\mathbf{f}_{n-2}^{(b)})$ is an instance of code $\mathcal{C}_1$,  by Lemma \ref{lem the requirement of obtaining fi} and \eqref{Eqn linear equations eq 1}, we can recover $\mathbf{f}_{i-2}^{(b)}$, $R_{i,\delta}\mathbf{g}_0^{(b)}$ and $R_{i,\delta}\mathbf{g}_1^{(b)}$ for all $b\in [2:\delta)$.
			\item [2)] Multiply $S_{i,\delta}$ on both sides of each of the first two LSEs in Figure \ref{fig t-PCG of (n,k) Code C3},
			then we obtain the following equations,
			\begin{eqnarray}\label{eqn the first LSE used to solve remainder node 1}
				\left\{\begin{array}{ll}
					S_{i,\delta}A_{t,i}\mathbf{f}_{i-2}^{(0)}+S_{i,\delta}(A_{t,0}\mathbf{g}_0^{(0)}-\sum\limits_{u=1}^{\delta-1}A_{t,n+u}\mathbf{g}_0^{(u)})+S_{i,\delta}A_{t,n}\mathbf{g}_1^{(0)}\\
					\hspace{7.5cm}+\sum\limits_{j=2,j\ne i}^{n-1}S_{i,\delta}A_{t,j}\mathbf{f}_{j-2}^{(0)}=\mathbf{0}, & t\in [0:r)\\
					S_{i,\delta}A_{t,i}\mathbf{f}_{i-2}^{(1)}+S_{i,\delta}A_{t,n+1}\mathbf{g}_0^{(1)}+S_{i,\delta}(A_{t,1}\mathbf{g}_1^{(1)}-\sum\limits_{u=0,u \ne 1}^{\delta-1}A_{t,n+u}\mathbf{g}_0^{(u)})\\
					\hspace{7.5cm}+\sum\limits_{j=2,j\ne i}^{n-1}S_{i,\delta}A_{t,j}\mathbf{f}_{j-2}^{(1)}=\mathbf{0}, & t\in [0:r)
				\end{array}\right.
			\end{eqnarray}
			Since node $i$ of  base code $\mathcal{C}_0$ has the $\delta$-optimal access property, first by  applying \eqref{repair node requirement3} and then  substituting  the downloaded/recovered data we can then simplify \eqref{eqn the first LSE used to solve remainder node 1} as
			\begin{align}\label{eqn the first LSE used to solve remainder node 3}
				\left\{\begin{array}{ll}
					S_{i,\delta}A_{t,i}\mathbf{f}_{i-2}^{(0)}+\tilde{A}_{t,0,i,\delta}R_{i,\delta}\mathbf{g}_0^{(0)}-\tilde{A}_{t,n+1,i,\delta}R_{i,\delta}\mathbf{g}_0^{(1)}+\tilde{A}_{t,n,i,\delta}R_{i,\delta}\mathbf{g}_1^{(0)}\\
					\hspace{6.5cm}+\sum\limits_{z=0}^{r-\delta-3}\tilde{A}_{t,j_z,i,\delta}R_{i,\delta}\mathbf{f}_{j_z-2}^{(0)}=*, & t\in [0:r)\\
					S_{i,\delta}A_{t,i}\mathbf{f}_{i-2}^{(1)}+\tilde{A}_{t,n+1,i,\delta}R_{i,\delta}\mathbf{g}_0^{(1)}+\tilde{A}_{t,1,i,\delta}R_{i,\delta}\mathbf{g}_1^{(1)}-\tilde{A}_{t,n,i,\delta}R_{i,\delta}\mathbf{g}_1^{(0)}\\
					\hspace{6.5cm}+\sum\limits_{z=0}^{r-\delta-3}\tilde{A}_{t,j_z,i,\delta}R_{i,\delta}\mathbf{f}_{j_z-2}^{(1)}=*, & t\in [0:r)
				\end{array}\right.
			\end{align}
			
			\item [3)] Recover the data $\mathbf{f}_{i-2}^{(0)}$ and $\mathbf{f}_{i-2}^{(1)}$ from \eqref{eqn the first LSE used to solve remainder node 3}. In fact, this step is almost the same as Case (iii) in the proof of Theorem \ref{Thr MDS property}, i.e., we recover the desired data by summing the two equations in  \eqref{eqn the first LSE used to solve remainder node 3} and then applying   Lemma \ref{lem the requirement of obtaining fi}.
		\end{itemize}	
		
		It follows from \eqref{eqn the downloaded data of repair RNs} that both the repair bandwidth and the amount of accessed data are $d\delta\cdot\textrm{Rank}(R_{i,\delta})=\frac{d \delta N}{\delta}=\frac{d \delta N}{d-k+1}$ due to $d=k+\delta-1$, which attains the lower bound in  \eqref{Eqn_bound_on_gamma} by the fact that the sub-packetization level of code $\mathcal{C}_3$ is $\delta N$. This finishes the proof.	\end{proof}
	
	\begin{Theorem}\label{Thr repair property for GN}
		The new $(n,k)$ MDS code $\mathcal{C}_3$ has the $\delta$-optimal access property for the two goal nodes.
	\end{Theorem}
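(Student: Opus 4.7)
The plan is to verify that, by symmetry, it suffices to repair goal node $0$, and that a single $0$-th parity-check group already carries enough information provided we exploit the MDS property of the base code $\mathcal{C}_0$. Fix any helper set $\mathcal{H}$ of size $d=k+\delta-1$, and write $\mathcal{D}_0\subset\{1\}$ for the (at most one) other goal node not in $\mathcal{H}$ and $\mathcal{D}_1\subset[0:n-2)$ for the missing remainder-node indices, so that $|\mathcal{D}_0|+|\mathcal{D}_1|=r-\delta$. The access strategy I would prescribe is extremely simple: from every helper remainder node $j$ download only $\mathbf{f}_{j-2}^{(0)}$, and from goal node $1$ (if it is a helper) download only $\mathbf{g}_1^{(0)}$. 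Each contacted node thus releases $N$ symbols out of the $\delta N$ it stores, so both the accessed and the downloaded amounts equal $dN=\tfrac{d}{d-k+1}\cdot\delta N$, matching the cut-set bound \eqref{Eqn_bound_on_gamma}.

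Next I would focus on the $0$-th LSE in Figure \ref{fig t-PCG of (n,k) Code C3}, which reads
\begin{eqnarray*}
\Big(A_{t,0}\mathbf{g}_0^{(0)}-\sum_{u=1}^{\delta-1}A_{t,n+u}\mathbf{g}_0^{(u)}\Big)+A_{t,n}\mathbf{g}_1^{(0)}+\sum_{j=0}^{n-3}A_{t,j+2}\mathbf{f}_j^{(0)}=\mathbf{0},\quad t\in[0:r).
\end{eqnarray*}
All terms $A_{t,j+2}\mathbf{f}_j^{(0)}$ for helper indices $j$ (and, when $\mathcal{D}_0=\emptyset$, also $A_{t,n}\mathbf{g}_1^{(0)}$) are directly computable from the downloaded data, so they can be moved to the right-hand side. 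What remains is an $r$-block system whose unknowns are $\mathbf{g}_0^{(0)},\ldots,\mathbf{g}_0^{(\delta-1)}$, the data $\mathbf{f}_j^{(0)}$ for $j\in\mathcal{D}_1$, and additionally $\mathbf{g}_1^{(0)}$ when $\mathcal{D}_0=\{1\}$. A direct count gives $\delta+|\mathcal{D}_1|+|\mathcal{D}_0|=r$ unknowns in both cases, matching the $r$ available block equations.

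The main step, and in my view the only non-routine one, is to argue that this square block system is invertible. Its coefficient matrix is precisely the $r\times r$ block sub-matrix of the parity-check matrix $(A_{t,i})$ of $\mathcal{C}_0$ indexed by the columns $\{0\}\cup\{n+u:u\in[0:\delta),\,u\notin\mathcal{D}_0\text{-induced exclusion}\}\cup\{j+2:j\in\mathcal{D}_1\}$. I would simply verify that these column indices are pairwise distinct and all lie in $[0:n')$: the three ingredient sets live in $\{0\}$, $[n:n+\delta)$ and $[2:n)$ respectively and hence are disjoint, with $n+\delta-1<n'$. The MDS property of $\mathcal{C}_0$ recalled in Section \ref{section sec2-1} then forces this sub-matrix to be nonsingular, and solving recovers $\mathbf{g}_0^{(a)}$ for every $a\in[0:\delta)$, i.e., the full contents of goal node $0$. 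The repair of goal node $1$ is obtained by applying the same argument to the $1$-st LSE in Figure \ref{fig t-PCG of (n,k) Code C3} with the roles of the two goal nodes swapped.
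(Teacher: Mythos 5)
Your proposal is correct and follows essentially the same route as the paper's proof: download the instance-$i$ portion from each helper, reduce the $i$-th LSE of Figure \ref{fig t-PCG of (n,k) Code C3} to a system with $r$ block unknowns (the $\delta$ pieces of the failed goal node plus the instance-$i$ pieces of the unconnected nodes), and invoke the MDS property of the base code $\mathcal{C}_0$ to invert the resulting $r\times r$ block sub-matrix, then count $dN=\frac{d\delta N}{d-k+1}$ accessed symbols against the bound \eqref{Eqn_bound_on_gamma}. Your explicit check that the involved column indices $\{0\}$, a subset of $[n:n+\delta)$, and $\{j+2:j\in\mathcal{D}_1\}\subseteq[2:n)$ are pairwise distinct and lie in $[0:n')$ is a detail the paper leaves implicit, but the argument is otherwise the same.
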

	\begin{proof}
		Let us consider the repair of goal node $i\in [0:2)$ by connecting $d=k+\delta-1$ surviving nodes, where the $(i+1)$-th LSE of Figure \ref{fig t-PCG of (n,k) Code C3} is used to recover the data stored at goal node $i$. By downloading $\mathbf{g}_l^{(i)}$ and $\mathbf{f}_j^{(i)}$ from goal node $l\in [0:2)\backslash (\mathcal{D}_0\cup \{i\})$ and remainder node $j\in [0:n-2)\backslash \mathcal{D}_1$, then the $(i+1)$-th LSE of Figure \ref{fig t-PCG of (n,k) Code C3} can be reduced to
		\begin{eqnarray}\label{eqn the second LSE for repair GN}
			A_{t,i}\mathbf{g}_i^{(i)}-\sum\limits_{u=0,u\ne i}^{\delta-1}A_{t,n+u}\mathbf{g}_i^{(u)}+\sum\limits_{l\in \mathcal{D}_0}A_{t,n+i}\mathbf{g}_l^{(i)}+\sum\limits_{z=0}^{ s-1}A_{t,j_z}\mathbf{f}_{j_z-2}^{(i)}=*, &t\in [0:r),
		\end{eqnarray}
		where $s=|\mathcal{D}_1|$. Obviously, there are
		$$\delta N+(|\mathcal{D}_0|+|\mathcal{D}_1|)N=\delta N+(n-d-1)N=\delta N+(r-\delta)N=rN$$ unknown variables in \eqref{eqn the second LSE for repair GN}, which implies that the data $\mathbf{g}_i^{(a)}$ for $a\in [0:\delta)$ can be solved by means of the MDS property of base code $\mathcal{C}_0$.  
		
		It is clear that the amount of accessed data and the repair bandwidth are the same, i.e., $dN=\frac{d\delta N}{d-k+1}$ because of $d=k+\delta-1$, which is $\delta$-optimal access according to the lower bound in  \eqref{Eqn_bound_on_gamma}.
	\end{proof}

	\section{New MDS Codes with $\delta$-Optimal Access Property for All Nodes}\label{section algorithm for all nodes dz-optimal access}
	In this section, we recursively  apply  the proposed generic transformation  to  an $(n',k'=n'-r)$  MDS code  to generate an $(n, k=n-r)$ MDS code with the $\delta$-optimal access property for all nodes. Specifically, by choosing an $(n'=n+\delta\lceil\frac{n}{2} \rceil,k'=k+\delta\lceil\frac{n}{2}\rceil)$ scalar MDS code over $\mathbb{F}_q$ as the base code, such as the well-known Reed-Solomon code, the $(n,k)$ desired code can be generated through  Algorithm \ref{Alg_repair_for_all_nodes}, i.e., by recursively  using the  generic transformation $\lceil\frac{n}{2}\rceil$ times.
	
	According to Theorems \ref{Thr MDS property}-\ref{Thr repair property for GN}, the following result can be obtained directly.
	\begin{Theorem}\label{Cro the desired code C}
		By choosing any $(n'=n+\delta\lceil\frac{n}{2} \rceil,k'=k+\delta\lceil\frac{n}{2}\rceil)$ scalar MDS code as base code, applying Algorithm \ref{Alg_repair_for_all_nodes}, one can get a new $(n,k)$ MDS code $\mathcal{C}$ with $\delta$-optimal access property for all nodes over $\mathbb{F}_q$ with $q\ge n+\delta\lceil\frac{n}{2} \rceil$, where the sub-packetization level of the new $(n,k)$ MDS code is $\delta^{\lceil\frac{n}{2} \rceil}$.
	\end{Theorem}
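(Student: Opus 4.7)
My plan is to prove Theorem \ref{Cro the desired code C} by induction on the number of times the generic transformation has been applied, invoking Theorems \ref{Thr MDS property}, \ref{Thr repair property for RN}, and \ref{Thr repair property for GN} at every inductive step. Let $\mathcal{C}^{(m)}$ denote the code obtained after $m$ applications, so that $\mathcal{C}^{(0)}$ is the chosen $(n', k') = (n + \delta\lceil n/2 \rceil, k + \delta\lceil n/2 \rceil)$ scalar (i.e., $N_0 = 1$) MDS base code. A Reed--Solomon code over $\mathbb{F}_q$ with $q \ge n'$ supplies such a $\mathcal{C}^{(0)}$, and the MDS property is immediate. Note that at sub-packetization level $1$, every node trivially has the $\delta$-optimal access property for any $\delta \in [2:r]$.

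For the inductive step, suppose that for some $0 \le m < \lceil n/2 \rceil$ the code $\mathcal{C}^{(m)}$ is an $(n' - m\delta,\, k' - m\delta)$ MDS code with sub-packetization level $\delta^m$ in which at least $2m$ nodes already enjoy the $\delta$-optimal access property (namely those designated as goal nodes in the previous $m$ applications). Applying the generic transformation of Section \ref{sec generic} once more, with goal nodes chosen (after a permutation, as specified by Algorithm \ref{Alg_repair_for_all_nodes}) among the nodes not yet endowed with the property, produces $\mathcal{C}^{(m+1)}$. By Theorem \ref{Thr MDS property}, $\mathcal{C}^{(m+1)}$ is an $(n' - (m+1)\delta,\, k' - (m+1)\delta)$ MDS code with sub-packetization $\delta \cdot \delta^m = \delta^{m+1}$. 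By Theorem \ref{Thr repair property for GN}, the two newly-selected goal nodes have the $\delta$-optimal access property. By Theorem \ref{Thr repair property for RN}, each of the $2m$ previously endowed nodes, now appearing as remainder nodes, retains its $\delta$-optimal access property. Hence $\mathcal{C}^{(m+1)}$ has the $\delta$-optimal access property at $2(m+1)$ nodes.

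Iterating this step $\lceil n/2 \rceil$ times yields $\mathcal{C} := \mathcal{C}^{(\lceil n/2 \rceil)}$, which has parameters $(n' - \delta\lceil n/2 \rceil,\, k' - \delta\lceil n/2 \rceil) = (n, k)$, sub-packetization level $\delta^{\lceil n/2 \rceil}$, and the $\delta$-optimal access property at $2\lceil n/2 \rceil \ge n$ nodes, i.e., at \emph{all} of its nodes. Because the transformation is defined entirely in terms of the parity-check matrices $A_{t,i}$ of the base code and performs only linear combinations of their rows, no field extension is needed, so $\mathcal{C}$ lives over the same $\mathbb{F}_q$ with $q \ge n + \delta\lceil n/2 \rceil$.

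The only non-mechanical point I anticipate is verifying that Algorithm \ref{Alg_repair_for_all_nodes} indeed schedules the $\lceil n/2 \rceil$ goal-node pairs so that, after accounting for the deletion of the last $\delta$ nodes in Step~1 of each application and the WLOG convention that the first two nodes are the goal nodes, every one of the $n$ final positions is designated as a goal node in exactly one application. This is a bookkeeping matter: relabeling the node indices before each transformation (equivalently, viewing the pair to be endowed as ``the first two'' via a permutation of columns of the parity-check matrix, which preserves both the MDS property and the previously-established $\delta$-optimal access property at other nodes) suffices, and the argument above then goes through unchanged.
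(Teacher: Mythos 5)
Your induction is essentially the paper's intended argument (the paper simply states that the theorem ``can be obtained directly'' from Theorems \ref{Thr MDS property}--\ref{Thr repair property for GN}): each round preserves MDSness and multiplies the sub-packetization by $\delta$ (Theorem \ref{Thr MDS property}), endows the two newly chosen goal nodes (Theorem \ref{Thr repair property for GN}), retains the property at previously endowed nodes now acting as remainder nodes (Theorem \ref{Thr repair property for RN}), and $2\lceil\frac{n}{2}\rceil\ge n$ rounds cover every node, with the field unchanged. One caveat: your side remark that every node of the scalar base code ``trivially has the $\delta$-optimal access property'' is false (with $N=1$ there is no $\frac{N}{\delta}\times N$ repair matrix, and scalar MDS repair is naive), but it is also unnecessary, since your induction only tracks the $2m$ previously designated goal nodes, so the argument stands once that remark is deleted.
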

	
	It is worthy noting that the sub-packetization level of the new $(n,k)$ MDS code $\mathcal{C}$ is much smaller than that of the two codes proposed in \cite{Ye_barg_1},  as shown in Table \ref{Table comp Ye Kumar}.
	
	\begin{algorithm}[!htbp]
		\caption{The method to generate an MDS code with the $\delta$-optimal access property for all nodes}\label{Alg_repair_for_all_nodes}
		\begin{algorithmic}[1]
			\Require The followings are required to input:
			\begin{enumerate}
				\item  The parameters $n$, $k$ and $\delta$;
				\item A known $(n',k')$ scalar MDS code $\mathcal{Q}_0$ over $\mathbb{F}_q$, where $k'=k+\tau\delta$ and $n'=n+\tau\delta$ with $\tau=\lceil\frac{n}{2} \rceil$.
			\end{enumerate}
			\Ensure The desired $(n,k)$ MDS code $\mathcal{Q}_\tau$ of sub-packetization level $\delta^\tau $ with $\delta$-optimal access property for all nodes.
			\For{$t=0$; $t < \tau$; $t++$}
			\State Set the $(n_t,k_t)$ MDS code $\mathcal{Q}_t$ as the base code where $n_t=n'-t\delta$ and $k_t=k'-t\delta$
			\If {$t< \tau-1$}
			\State Designate the two nodes $2t,2t+1$ as the goal nodes
			\Else
			\State Designate the two nodes $n-2,n-1$ as the goal nodes
			\EndIf
			\State Applying the generic transformation to code $\mathcal{Q}_t$ to get a new code $\mathcal{Q}_{t+1}$ with sub-packetization level $\delta^{t+1}$.
			\EndFor
		\end{algorithmic}
	\end{algorithm}

	In the following, we provide an example of Algorithm \ref{Alg_repair_for_all_nodes}.
	\begin{Example}
		Our ultimate goal is to obtain an $(8,5)$ MDS code with $2$-optimal access for all nodes through four rounds of the generic transformations according to Algorithm \ref{Alg_repair_for_all_nodes}, where a $(16,13)$ scalar MDS code over $\mathbb{F}_{17}$  is chosen as the base code $\mathcal{Q}_0$. The $t$-th PCG of the base code $\mathcal{Q}_0$ is given follows:
		\begin{eqnarray*}
			f_{0,0}+2^tf_{1,0}+\cdots+16^t f_{15,0}=0, & t\in [0:3).
		\end{eqnarray*}
		
		Through the four rounds of generic transformations, we obtain the codes $\mathcal{Q}_1$, $\mathcal{Q}_2$, $\mathcal{Q}_3$ and $\mathcal{Q}_4$, where the $t$-th PCG of codes $\mathcal{Q}_1$, $\mathcal{Q}_2$, $\mathcal{Q}_3$ and $\mathcal{Q}_4$  are respectively given by Figures \ref{fig the PCG of Q1}, \ref{fig the PCG of Q2}, \ref{fig the PCG of Q3} and \ref{fig the PCG of Q4}. For convenience, the $a$-th symbol of node $i$ in each one of the codes $\mathcal{Q}_0,\mathcal{Q}_1,\cdots,\mathcal{Q}_4$ is always represented by $f_{i,a}$. Moreover, in the first, second, third and fourth rounds, nodes 0, 1, nodes 2, 3, nodes 4, 5 and nodes 6, 7 are chosen as the goal nodes, respectively.
		\begin{figure}[!htbp]
			\tikzset{
				source1/.style={draw,thick,rounded corners,align=center,inner sep=.1cm,minimum width=1.3cm,minimum height=1.0cm},
				source2/.style={draw,thick,rounded corners,align=center,inner sep=.1cm,minimum width=2.7cm,minimum height=1.0cm},
				global scale/.style={scale=#1,every node/.append style={scale=#1}}
			}
			\centering
			\begin{tikzpicture}[global scale=0.7]
				\node [source2] (GN 0) at (-0.1,0.1)  {PCG-data \\ of node 0};
				\node [source2] (GN 2) at (3.2,0.1)  {PCG-data \\ of node 1};
				\node [source1] (RN 0) at (6.1,0.1)  {PCG-data \\ of node 2};
				\node [source1] (RN 0) at (8.6,0.1)  {PCG-data \\ of node 3};
				\node  at (10.2,0.1)  {$\cdots$};
				\node [source1] (RN 0) at (12.1,0.1)  {PCG-data \\ of node 12};
				\node [source1] (RN 0) at (14.8,0.1)  {PCG-data \\ of node 13};
				\node  at (-0.1,-1.2)  {$f_{0,0}-16^t f_{0,1}$};
				\node  at (1.6,-1.2) {$+$};
				\node  at (3.2,-1.2)  {$15^tf_{1,0}$};
				\node  at (4.9,-1.2) {$+$};
				\node  at (6.1,-1.2)  {$3^tf_{2,0}$};
				\node  at (7.35,-1.2) {$+$};
				\node  at (8.6,-1.2)  {$4^tf_{3,0}$};
				\node  at (9.75,-1.2) {$+$};
				\node  at (10.3,-1.2)  {$\cdots$};
				\node  at (10.8,-1.2) {$+$};
				\node  at (12.1,-1.2)  {$13^tf_{12,0}$};
				\node  at (13.4,-1.2) {$+$};
				\node  at (14.8,-1.2)  {$14^tf_{13,0}$};
				\node  at (16.3,-1.2)  {$=0$};
				\node  at (-0.1,-1.8)  {$16^t f_{0,1}$};
				\node  at (1.6,-1.8) {$+$};
				\node  at (3.2,-1.8)  {$2^tf_{1,1}-15^tf_{1,0}$};
				\node  at (4.9,-1.8) {$+$};
				\node  at (6.1,-1.8)  {$3^tf_{2,1}$};
				\node  at (7.35,-1.8) {$+$};
				\node  at (8.6,-1.8)  {$4^tf_{3,1}$};
				\node  at (9.75,-1.8) {$+$};
				\node  at (10.3,-1.8) {$\cdots$};
				\node  at (10.8,-1.8) {$+$};
				\node  at (12.1,-1.8)  {$13^tf_{12,1}$};
				\node  at (13.4,-1.8) {$+$};
				\node  at (14.8,-1.8)  {$14^tf_{13,1}$};
				\node  at (16.3,-1.8)  {$=0$};
				\draw[dashed] (-1.4,-0.8) rectangle (1.2,-2.2);
				\draw[dashed] (1.85,-0.8) rectangle (4.55,-2.2);
				\draw[dashed] (5.2,-0.8) rectangle (7.0,-2.2);
				\draw[dashed] (7.7,-0.8) rectangle (9.5,-2.2);
				\draw[dashed] (11.1,-0.8) rectangle (13.1,-2.2);
				\draw[dashed] (13.8,-0.8) rectangle (15.75,-2.2);
			\end{tikzpicture}
			\captionsetup{justification=centering}
			\caption{The $t$-th PCG of the $(14,11)$ MDS code $\mathcal{Q}_1$, where $t\in [0:3)$ and nodes 0, 1 are chosen as goal nodes.}
			\label{fig the PCG of Q1}
		\end{figure}
		\begin{figure}[!htbp]
			\tikzset{
				source1/.style={draw,thick,rounded corners,align=center,inner sep=.1cm,minimum width=1.3cm,minimum height=0.8cm},
				source2/.style={draw,thick,rounded corners,align=center,inner sep=.1cm,minimum width=2.4cm,minimum height=0.8cm},
				global scale/.style={scale=#1,every node/.append style={scale=#1}}
			}
			\centering
			\begin{tikzpicture}[global scale=0.7]
				\node [source2] (GN 0) at (0,0.2)  { PCG-data\\ of node 0};
				\node [source2] (GN 2) at (2.9,0.2)  {PCG-data\\ of node 1};
				\node [source2] (RN 0) at (5.8,0.2)  {PCG-data\\ of node 2};
				\node [source2] (RN 0) at (8.7,0.2)  { PCG-data\\ of node 3};
				\node [source1] (RN 0) at (11.4,0.2)  {PCG-data\\ of node 4};
				\node [source1] (RN 0) at (13.8,0.2)  {PCG-data\\ of node 5};
				\node  at (15.5,0.2)  { $\cdots$};
				\node [source1] (RN 0) at (17.2,0.2)  {PCG-data\\ of node 10};
				\node [source1] (RN 0) at (19.8,0.2)  {PCG-data\\ of node 11};
				\node  at (0,-1)  {\small $f_{0,0}-16^t f_{0,1}$};
				\node  at (1.45,-1) {\small $+$};
				\node  at (2.9,-1)  {\small $15^tf_{1,0}$};
				\node  at (4.35,-1) {\small $+$};
				\node  at (5.8,-1)  {\small $3^tf_{2,0}-14^tf_{2,2}$};
				\node  at (7.25,-1) {\small $+$};
				\node  at (8.7,-1)  {\small $13^tf_{3,0}$};
				\node  at (10.15,-1) {\small $+$};
				\node  at (11.4,-1)  {\small $5^tf_{4,0}$};
				\node  at (12.6,-1) {\small $+$};
				\node  at (13.8,-1)  {\small $6^tf_{5,0}$};
				\node  at (15.0,-1) {\small $+$};
				\node  at (15.5,-1)  {\small $\cdots$};
				\node  at (15.9,-1)  {\small $+$};
				\node  at (17.2,-1)  {\small $11^tf_{10,0}$};
				\node  at (18.5,-1)  {\small $+$};
				\node  at (19.8,-1)  {\small $12^tf_{11,0}$};
				\node  at (21.2,-1)  {\small $=0$};
				\node  at (0,-1.6)  {\small $16^tf_{0,1}$};
				\node  at (1.45,-1.6) {\small $+$};
				\node  at (2.9,-1.6)  {\small $2^tf_{1,1}-15^tf_{1,0}$};
				\node  at (4.35,-1.6) {\small $+$};
				\node  at (5.8,-1.6)  {\small $3^tf_{2,1}-14^tf_{2,3}$};
				\node  at (7.25,-1.6) {\small $+$};
				\node  at (8.7,-1.6)  {\small $13^tf_{3,1}$};
				\node  at (10.15,-1.6) {\small $+$};
				\node  at (11.4,-1.6) {\small $5^tf_{4,1}$};
				\node  at (12.6,-1.6) {\small $+$};
				\node  at (13.8,-1.6)  {\small $6^tf_{5,1}$};
				\node  at (15.0,-1.6) {\small $+$};
				\node  at (15.5,-1.6)  {\small $\cdots$};
				\node  at (15.9,-1.6)  {\small $+$};
				\node  at (17.2,-1.6)  {\small $11^tf_{10,0}$};
				\node  at (18.5,-1.6)  {\small $+$};
				\node  at (19.8,-1.6)  {\small $12^tf_{11,0}$};
				\node  at (21.2,-1.6)  {\small $=0$};
				\node  at (0,-2.2)  {\small $ f_{0,2}-16^t f_{0,3}$};
				\node  at (1.45,-2.2) {\small $+$};
				\node  at (2.9,-2.2)  {\small $15^tf_{1,2}$};
				\node  at (4.35,-2.2) {\small $+$};
				\node  at (5.8,-2.2)  {\small $14^tf_{2,2}$};
				\node  at (7.25,-2.2) {\small $+$};
				\node  at (8.7,-2.2)  {\small $4^tf_{3,2}-13^tf_{3,0}$};
				\node  at (10.15,-2.2) {\small $+$};
				\node  at (11.4,-2.2)  {\small $5^tf_{4,2}$};
				\node  at (12.6,-2.2) {\small $+$};
				\node  at (13.8,-2.2)  {\small $6^tf_{5,2}$};
				\node  at (15.0,-2.2) {\small $+$};
				\node  at (15.5,-2.2)  {\small $\cdots$};
				\node  at (15.9,-2.2)  {\small $+$};
				\node  at (17.2,-2.2)  {\small $11^tf_{10,0}$};
				\node  at (18.5,-2.2)  {\small $+$};
				\node  at (19.8,-2.2)  {\small $12^tf_{11,0}$};
				\node  at (21.2,-2.2)  {\small $=0$};
				\node  at (0,-2.8)  {\small $16^t f_{0,3}$};
				\node  at (1.45,-2.8) {\small $+$};
				\node  at (2.9,-2.8)  {\small $2^tf_{1,3}-15^tf_{1,2}$};
				\node  at (4.35,-2.8) {\small $+$};
				\node  at (5.8,-2.8)  {\small $14^tf_{2,3}$};
				\node  at (7.25,-2.8) {\small $+$};
				\node  at (8.7,-2.8)  {\small $4^tf_{3,3}-13^tf_{3,1}$};
				\node  at (10.15,-2.8) {\small $+$};
				\node  at (11.4,-2.8) {\small $5^tf_{4,3}$};
				\node  at (12.6,-2.8) {\small $+$};
				\node  at (13.8,-2.8)  {\small $6^tf_{5,3}$};
				\node  at (15.0,-2.8) {\small $+$};
				\node  at (15.5,-2.8)  {\small $\cdots$};
				\node  at (15.9,-2.8)  {\small $+$};
				\node  at (17.2,-2.8)  {\small $11^tf_{10,0}$};
				\node  at (18.5,-2.8)  {\small $+$};
				\node at (19.8,-2.8)  {\small $12^tf_{11,0}$};
				\node  at (21.2,-2.8)  {\small $=0$};
				\draw[dashed] (-1.2,-0.7) rectangle (1.15,-3.1);
				\draw[dashed] (1.7,-0.7) rectangle (4.1,-3.1);
				\draw[dashed] (4.6,-0.7) rectangle (7.0,-3.1);
				\draw[dashed] (7.5,-0.7) rectangle (9.9,-3.1);
				\draw[dashed] (10.5,-0.7) rectangle (12.3,-3.1);
				\draw[dashed] (12.9,-0.7) rectangle (14.7,-3.1);
				\draw[dashed] (16.2,-0.7) rectangle (18.2,-3.1);
				\draw[dashed] (18.8,-0.7) rectangle (20.8,-3.1);
			\end{tikzpicture}
			\captionsetup{justification=centering}
			\caption{The $t$-th PCG of the $(12,9)$ MDS code $\mathcal{Q}_2$, where $t\in [0:3)$ and nodes 2, 3 are chosen as goal nodes}
			\label{fig the PCG of Q2}
		\end{figure}
		\begin{figure}[!htbp]
			\tikzset{
				source1/.style={draw,thick,rounded corners,align=center,inner sep=.1cm,minimum width=1.3cm,minimum height=0.8cm},
				source2/.style={draw,thick,rounded corners,align=center,inner sep=.1cm,minimum width=2.4cm,minimum height=0.8cm},
				global scale/.style={scale=#1,every node/.append style={scale=#1}}
			}
			\centering
			\begin{tikzpicture}[global scale=0.61]
				\node [source2] (GN 0) at (0,0.2)  { PCG-data\\ of node 0};
				\node [source2] (GN 2) at (2.9,0.2)  {PCG-data\\ of node 1};
				\node [source2] (RN 0) at (5.8,0.2)  {PCG-data\\ of node 2};
				\node [source2] (RN 0) at (8.7,0.2)  {PCG-data\\ of node 3};
				\node [source2] (RN 0) at (11.6,0.2)  {PCG-data\\ of node 4};
				\node [source2] (RN 0) at (14.5,0.2)  {PCG-data\\ of node 5};
				\node [source1] (RN 0) at (17.1,0.2)  {PCG-data\\ of node 6};
				\node [source1] (RN 0) at (19.5,0.2)  {PCG-data\\ of node 7};
				\node [source1] (RN 0) at (21.9,0.2)  {PCG-data\\ of node 8};
				\node [source1] (RN 0) at (24.3,0.2)  {PCG-data\\ of node 9};
				\node  at (0,-1)  {\small $f_{0,0}-16^t f_{0,1}$};
				\node  at (1.45,-1) {\small $+$};
				\node  at (2.9,-1)  {\small $15^tf_{1,0}$};
				\node  at (4.35,-1) {\small $+$};
				\node  at (5.8,-1)  {\small $3^tf_{2,0}-14^tf_{2,2}$};
				\node  at (7.25,-1) {\small $+$};
				\node  at (8.7,-1)  {\small $13^tf_{3,0}$};
				\node  at (10.15,-1) {\small $+$};
				\node  at (11.6,-1)  {\small $5^tf_{4,0}-12^tf_{4,4}$};
				\node  at (13.05,-1) {\small $+$};
				\node  at (14.5,-1)  {\small $11^tf_{5,0}$};
				\node  at (15.9,-1) {\small $+$};
				\node  at (17.1,-1)  {\small $7^tf_{6,0}$};
				\node  at (18.3,-1)  {\small $+$};
				\node  at (19.5,-1)  {\small $8^tf_{7,0}$};
				\node  at (20.7,-1)  {\small $+$};
				\node  at (21.9,-1)  {\small $9^tf_{8,0}$};
				\node  at (23.1,-1)  {\small $+$};
				\node  at (24.3,-1)  {\small $10^tf_{9,0}$};
				\node  at (25.7,-1)  {\small $=0$};
				\node  at (0,-1.6)  {\small $16^tf_{0,1}$};
				\node  at (1.45,-1.6) {\small $+$};
				\node  at (2.9,-1.6)  {\small $2^tf_{1,1}-15^tf_{1,0}$};
				\node  at (4.35,-1.6) {\small $+$};
				\node  at (5.8,-1.6)  {\small $3^tf_{2,1}-14^tf_{2,3}$};
				\node  at (7.25,-1.6) {\small $+$};
				\node  at (8.7,-1.6)  {\small $13^tf_{3,1}$};
				\node  at (10.15,-1.6) {\small $+$};
				\node  at (11.6,-1.6) {\small $5^tf_{4,1}-12^tf_{4,5}$};
				\node  at (13.05,-1.6) {\small $+$};
				\node  at (14.5,-1.6)  {\small $11^tf_{5,1}$};
				\node  at (15.9,-1.6) {\small $+$};
				\node  at (17.1,-1.6)  {\small $7^tf_{6,1}$};
				\node  at (18.3,-1.6)  {\small $+$};
				\node  at (19.5,-1.6)  {\small $8^tf_{7,1}$};
				\node  at (20.7,-1.6)  {\small $+$};
				\node  at (21.9,-1.6)  {\small $9^tf_{8,1}$};
				\node  at (23.1,-1.6)  {\small $+$};
				\node  at (24.3,-1.6)  {\small $10^tf_{9,1}$};
				\node  at (25.7,-1.6)  {\small $=0$};
				\node  at (0,-2.2)  {\small $ f_{0,2}\hspace{-0.5mm}-\hspace{-0.5mm}16^t f_{0,3}$};
				\node  at (1.45,-2.2) {\small $+$};
				\node  at (2.9,-2.2)  {\small $15^tf_{1,2}$};
				\node  at (4.35,-2.2) {\small $+$};
				\node  at (5.8,-2.2)  {\small $14^tf_{2,2}$};
				\node  at (7.25,-2.2) {\small $+$};
				\node  at (8.7,-2.2)  {\small $4^tf_{3,2}\hspace{-0.5mm}-\hspace{-0.5mm}13^tf_{3,0}$};
				\node  at (10.15,-2.2) {\small $+$};
				\node  at (11.6,-2.2)  {\small $5^tf_{4,2}\hspace{-0.5mm}-\hspace{-0.5mm}12^tf_{4,6}$};
				\node  at (13.05,-2.2) {\small $+$};
				\node  at (14.5,-2.2)  {\small $11^tf_{5,2}$};
				\node  at (15.9,-2.2) {\small $+$};
				\node  at (17.1,-2.2)  {\small $7^tf_{6,2}$};
				\node  at (18.3,-2.2)  {\small $+$};
				\node  at (19.5,-2.2)  {\small $8^tf_{7,2}$};
				\node  at (20.7,-2.2)  {\small $+$};
				\node  at (21.9,-2.2)  {\small $9^tf_{8,2}$};
				\node  at (23.1,-2.2)  {\small $+$};
				\node  at (24.3,-2.2)  {\small $10^tf_{9,2}$};
				\node  at (25.7,-2.2)  {\small $=0$};
				\node  at (0,-2.8)  {\small $16^t f_{0,3}$};
				\node  at (1.45,-2.8) {\small $+$};
				\node  at (2.9,-2.8)  {\small $2^tf_{1,3}\hspace{-0.5mm}-\hspace{-0.5mm}15^tf_{1,2}$};
				\node  at (4.35,-2.8) {\small $+$};
				\node  at (5.8,-2.8)  {\small $14^tf_{2,3}$};
				\node  at (7.25,-2.8) {\small $+$};
				\node  at (8.7,-2.8)  {\small $4^tf_{3,3}\hspace{-0.5mm}-\hspace{-0.5mm}13^tf_{3,1}$};
				\node  at (10.15,-2.8) {\small $+$};
				\node  at (11.6,-2.8) {\small $5^tf_{4,3}\hspace{-0.5mm}-\hspace{-0.5mm}12^tf_{4,7}$};
				\node  at (13.05,-2.8) {\small $+$};
				\node  at (14.5,-2.8)  {\small $11^tf_{5,3}$};
				\node  at (15.9,-2.8) {\small $+$};
				\node  at (17.1,-2.8)  {\small $7^tf_{6,3}$};
				\node  at (18.3,-2.8)  {\small $+$};
				\node  at (19.5,-2.8)  {\small $8^tf_{7,3}$};
				\node  at (20.7,-2.8)  {\small $+$};
				\node  at (21.9,-2.8)  {\small $9^tf_{8,3}$};
				\node  at (23.1,-2.8)  {\small $+$};
				\node  at (24.3,-2.8)  {\small $10^tf_{9,3}$};
				\node  at (25.7,-2.8)  {\small $=0$};
				\node  at (0,-3.4)  {\small $f_{0,4}-16^t f_{0,5}$};
				\node  at (1.45,-3.4) {\small $+$};
				\node  at (2.9,-3.4)  {\small $15^tf_{1,4}$};
				\node  at (4.35,-3.4) {\small $+$};
				\node  at (5.8,-3.4)  {\small $3^tf_{2,4}\hspace{-0.5mm}-\hspace{-0.5mm}14^tf_{2,6}$};
				\node  at (7.25,-3.4) {\small $+$};
				\node  at (8.7,-3.4)  {\small $13^tf_{3,4}$};
				\node  at (10.15,-3.4) {\small $+$};
				\node  at (11.6,-3.4)  {\small $12^tf_{4,4}$};
				\node  at (13.05,-3.4) {\small $+$};
				\node  at (14.5,-3.4)  {\small $6^tf_{5,4}\hspace{-0.5mm}-\hspace{-0.5mm}11^tf_{5,0}$};
				\node  at (15.9,-3.4) {\small $+$};
				\node  at (17.1,-3.4)  {\small $7^tf_{6,4}$};
				\node  at (18.3,-3.4)  {\small $+$};
				\node  at (19.5,-3.4)  {\small $8^tf_{7,4}$};
				\node  at (20.7,-3.4)  {\small $+$};
				\node  at (21.9,-3.4)  {\small $9^tf_{8,4}$};
				\node  at (23.1,-3.4)  {\small $+$};
				\node  at (24.3,-3.4)  {\small $10^tf_{9,4}$};
				\node  at (25.7,-3.4)  {\small $=0$};
				\node  at (0,-4.0)  {\small $16^tf_{0,5}$};
				\node  at (1.45,-4.0) {\small $+$};
				\node  at (2.9,-4.0)  {\small $2^tf_{1,5}\hspace{-0.5mm}-\hspace{-0.5mm}15^tf_{1,4}$};
				\node  at (4.35,-4.0) {\small $+$};
				\node  at (5.8,-4.0)  {\small $3^tf_{2,5}\hspace{-0.5mm}-\hspace{-0.5mm}14^tf_{2,7}$};
				\node  at (7.25,-4.0) {\small $+$};
				\node  at (8.7,-4.0)  {\small $13^tf_{3,5}$};
				\node  at (10.15,-4.0) {\small $+$};
				\node  at (11.6,-4.0) {\small $12^tf_{4,5}$};
				\node  at (13.05,-4.0) {\small $+$};
				\node  at (14.5,-4.0)  {\small $6^tf_{5,5}\hspace{-0.5mm}-\hspace{-0.5mm}11^tf_{5,1}$};
				\node  at (15.9,-4.0) {\small $+$};
				\node  at (17.1,-4.0)  {\small $7^tf_{6,5}$};
				\node  at (18.3,-4.0)  {\small $+$};
				\node  at (19.5,-4.0)  {\small $8^tf_{7,5}$};
				\node  at (20.7,-4.0)  {\small $+$};
				\node  at (21.9,-4.0)  {\small $9^tf_{8,5}$};
				\node  at (23.1,-4.0)  {\small $+$};
				\node  at (24.3,-4.0)  {\small $10^tf_{9,5}$};
				\node  at (25.7,-4.0)  {\small $=0$};
				\node  at (0,-4.6)  {\small $ f_{0,6}-16^t f_{0,7}$};
				\node  at (1.45,-4.6) {\small $+$};
				\node  at (2.9,-4.6)  {\small $15^tf_{1,6}$};
				\node  at (4.35,-4.6) {\small $+$};
				\node  at (5.8,-4.6)  {\small $14^tf_{2,6}$};
				\node  at (7.25,-4.6) {\small $+$};
				\node  at (8.7,-4.6)  {\small $4^tf_{3,6}\hspace{-0.5mm}-\hspace{-0.5mm}13^tf_{3,4}$};
				\node  at (10.15,-4.6) {\small $+$};
				\node  at (11.6,-4.6)  {\small $12^tf_{4,6}$};
				\node  at (13.05,-4.6) {\small $+$};
				\node  at (14.5,-4.6)  {\small $6^tf_{5,6}\hspace{-0.5mm}-\hspace{-0.5mm}11^tf_{5,2}$};
				\node  at (15.9,-4.6) {\small $+$};
				\node  at (17.1,-4.6)  {\small $7^tf_{6,6}$};
				\node  at (18.3,-4.6)  {\small $+$};
				\node  at (19.5,-4.6)  {\small $8^tf_{7,6}$};
				\node  at (20.7,-4.6)  {\small $+$};
				\node  at (21.9,-4.6)  {\small $9^tf_{8,6}$};
				\node  at (23.1,-4.6)  {\small $+$};
				\node  at (24.3,-4.6)  {\small $10^tf_{9,6}$};
				\node  at (25.7,-4.6)  {\small $=0$};
				\node  at (0,-5.2)  {\small $16^t f_{0,7}$};
				\node  at (1.45,-5.2) {\small $+$};
				\node  at (2.9,-5.2)  {\small $2^tf_{1,7}\hspace{-0.5mm}-\hspace{-0.5mm}15^tf_{1,6}$};
				\node  at (4.35,-5.2) {\small $+$};
				\node  at (5.8,-5.2)  {\small $14^tf_{2,7}$};
				\node  at (7.25,-5.2) {\small $+$};
				\node  at (8.7,-5.2)  {\small $4^tf_{3,7}\hspace{-0.5mm}-\hspace{-0.5mm}13^tf_{3,5}$};
				\node  at (10.15,-5.2) {\small $+$};
				\node  at (11.6,-5.2) {\small $12^tf_{4,7}$};
				\node  at (13.05,-5.2) {\small $+$};
				\node  at (14.5,-5.2)  {\small $6^tf_{5,7}\hspace{-0.5mm}-\hspace{-0.5mm}11^tf_{5,3}$};
				\node  at (15.9,-5.2) {\small $+$};
				\node  at (17.1,-5.2)  {\small $7^tf_{6,7}$};
				\node  at (18.3,-5.2)  {\small $+$};
				\node  at (19.5,-5.2)  {\small $8^tf_{7,7}$};
				\node  at (20.7,-5.2)  {\small $+$};
				\node  at (21.9,-5.2)  {\small $9^tf_{8,7}$};
				\node  at (23.1,-5.2)  {\small $+$};
				\node  at (24.3,-5.2)  {\small $10^tf_{9,7}$};
				\node  at (25.7,-5.2)  {\small $=0$};		
				\draw[dashed] (-1.15,-0.7) rectangle (1.15,-5.5);
				\draw[dashed] (1.7,-0.7) rectangle (4.1,-5.5);
				\draw[dashed] (4.6,-0.7) rectangle (7,-5.5);
				\draw[dashed] (7.5,-0.7) rectangle (9.85,-5.5);
				\draw[dashed] (10.4,-0.7) rectangle (12.8,-5.5);
				\draw[dashed] (13.3,-0.7) rectangle (15.65,-5.5);
				\draw[dashed] (16.2,-0.7) rectangle (18.0,-5.5);
				\draw[dashed] (18.6,-0.7) rectangle (20.4,-5.5);
				\draw[dashed] (21.0,-0.7) rectangle (22.8,-5.5);
				\draw[dashed] (23.4,-0.7) rectangle (25.2,-5.5);
			\end{tikzpicture}
			\captionsetup{justification=centering}
			\caption{The $t$-th PCG of the $(10,7)$ MDS code $\mathcal{Q}_3$, where $t\in [0:3)$ and nodes 4, 5 are chosen as goal nodes.}
			\label{fig the PCG of Q3}
		\end{figure}
		
		\begin{figure}[!htbp]
			\tikzset{
				source1/.style={draw,thick,rounded corners,align=center,inner sep=.1cm,minimum width=1.3cm,minimum height=0.8cm},
				source2/.style={draw,thick,rounded corners,align=center,inner sep=.1cm,minimum width=2.6cm,minimum height=0.8cm},
				global scale/.style={scale=#1,every node/.append style={scale=#1}}
			}
			\centering
			\begin{tikzpicture}[global scale=0.67]
				\node [source2] (GN 0) at (0,0.2)  { PCG-data \\ of node 0};
				\node [source2] (GN 2) at (3.0,0.2)  { PCG-data \\  of node 1};
				\node [source2] (RN 0) at (6,0.2)  { PCG-data \\  of node 2};
				\node [source2] (RN 0) at (9,0.2)  { PCG-data \\  of node 3};
				\node [source2] (RN 0) at (12,0.2)  { PCG-data \\  of node 4};
				\node [source2] (RN 0) at (15,0.2)  { PCG-data \\  of node 5};
				\node [source2] (RN 0) at (18,0.2)  { PCG-data \\  of node 6};
				\node [source2] (RN 0) at (21,0.2)  { PCG-data \\  of node 7};
				\node  at (0,-1)  {\small $f_{0,0}\hspace{-0.5mm}-\hspace{-0.5mm}16^t f_{0,1}$};
				\node  at (1.45,-1) {\small $+$};
				\node  at (3,-1)  {\small $15^tf_{1,0}$};
				\node  at (4.5,-1) {\small $+$};
				\node  at (6,-1)  {\small $3^tf_{2,0}\hspace{-0.5mm}-\hspace{-0.5mm}14^tf_{2,2}$};
				\node  at (7.5,-1) {\small $+$};
				\node  at (9,-1)  {\small $13^tf_{3,0}$};
				\node  at (10.5,-1) {\small $+$};
				\node  at (12,-1)  {\small $5^tf_{4,0}\hspace{-0.5mm}-\hspace{-0.5mm}12^tf_{4,4}$};
				\node  at (13.5,-1) {\small $+$};
				\node  at (15,-1)  {\small $11^tf_{5,0}$};
				\node  at (16.5,-1) {\small $+$};
				\node  at (18,-1)  {\small $7^tf_{6,0}\hspace{-0.5mm}-\hspace{-0.5mm}10^tf_{6,8}$};
				\node  at (19.5,-1)  {\small $+$};
				\node  at (21,-1)  {\small $9^tf_{7,0}$};
				\node  at (22.7,-1) {$=0$};
				\node  at (0,-1.6)  {\small $16^tf_{0,1}$};
				\node  at (1.45,-1.6) {\small $+$};
				\node  at (3.0,-1.6)  {\small $2^tf_{1,1}\hspace{-0.5mm}-\hspace{-0.5mm}15^tf_{1,0}$};
				\node  at (4.5,-1.6) {\small $+$};
				\node  at (6.0,-1.6)  {\small $3^tf_{2,1}-14^tf_{2,3}$};
				\node  at (7.5,-1.6) {\small $+$};
				\node  at (9.0,-1.6)  {\small $13^tf_{3,1}$};
				\node  at (10.5,-1.6) {\small $+$};
				\node  at (12.0,-1.6) {\small $5^tf_{4,1}\hspace{-0.5mm}-\hspace{-0.5mm}12^tf_{4,5}$};
				\node  at (13.5,-1.6) {\small $+$};
				\node  at (15.0,-1.6)  {\small $11^tf_{5,1}$};
				\node  at (16.5,-1.6) {\small $+$};
				\node  at (18.0,-1.6)  {\small $7^tf_{6,1}\hspace{-0.5mm}-\hspace{-0.5mm}10^tf_{6,9}$};
				\node  at (19.5,-1.6)  {\small $+$};
				\node  at (21.0,-1.6)  {\small $9^tf_{7,1}$};
				\node  at (22.7,-1.6)  {$=0$};
				\node  at (0,-2.2)  {\small $ f_{0,2}\hspace{-0.5mm}-\hspace{-0.5mm}16^t f_{0,3}$};
				\node  at (1.45,-2.2) {\small $+$};
				\node  at (3.0,-2.2)  {\small $15^tf_{1,2}$};
				\node  at (4.5,-2.2) {\small $+$};
				\node  at (6.0,-2.2)  {\small $14^tf_{2,2}$};
				\node  at (7.5,-2.2) {\small $+$};
				\node  at (9.0,-2.2)  {\small $4^tf_{3,2}\hspace{-0.5mm}-\hspace{-0.5mm}13^tf_{3,0}$};
				\node  at (10.5,-2.2) {\small $+$};
				\node  at (12.0,-2.2)  {\small $5^tf_{4,2}\hspace{-0.5mm}-\hspace{-0.5mm}12^tf_{4,6}$};
				\node  at (13.5,-2.2) {\small $+$};
				\node  at (15.0,-2.2)  {$11^tf_{5,2}$};
				\node  at (16.5,-2.2) {\small $+$};
				\node  at (18.0,-2.2)  {\small $7^tf_{6,2}\hspace{-0.5mm}-\hspace{-0.5mm}10^tf_{6,10}$};
				\node  at (19.5,-2.2)  {\small $+$};
				\node  at (21.0,-2.2)  {\small $9^tf_{7,2}$};
				\node  at (22.7,-2.2)  {$=0$};
				\node  at (0,-2.8)  {\small $16^t f_{0,3}$};
				\node  at (1.45,-2.8) {\small $+$};
				\node  at (3.0,-2.8)  {\small $2^tf_{1,3}\hspace{-0.5mm}-\hspace{-0.5mm}15^tf_{1,2}$};
				\node  at (4.5,-2.8) {\small $+$};
				\node  at (6.0,-2.8)  {\small $14^tf_{2,3}$};
				\node  at (7.5,-2.8) {\small $+$};
				\node  at (9.0,-2.8)  {\small $4^tf_{3,3}\hspace{-0.5mm}-\hspace{-0.5mm}13^tf_{3,1}$};
				\node  at (10.5,-2.8) {\small $+$};
				\node  at (12.0,-2.8) {\small $5^tf_{4,3}\hspace{-0.5mm}-\hspace{-0.5mm}12^tf_{4,7}$};
				\node  at (13.5,-2.8) {\small $+$};
				\node  at (15.0,-2.8)  {\small $11^tf_{5,3}$};
				\node  at (16.5,-2.8) {\small $+$};
				\node  at (18.0,-2.8)  {\small $7^tf_{6,3}\hspace{-0.5mm}-\hspace{-0.5mm}10^tf_{6,11}$};
				\node  at (19.5,-2.8)  {\small $+$};
				\node  at (21.0,-2.8)  {\small $9^tf_{7,3}$};
				\node  at (22.7,-2.8)  {$=0$};
				\node  at (0,-3.4)  {\small $f_{0,4}-16^t f_{0,5}$};
				\node  at (1.45,-3.4) {\small $+$};
				\node  at (3.0,-3.4)  {\small $15^tf_{1,4}$};
				\node  at (4.5,-3.4) {\small $+$};
				\node  at (6.0,-3.4)  {\small $3^tf_{2,4}\hspace{-0.5mm}-\hspace{-0.5mm}14^tf_{2,6}$};
				\node  at (7.5,-3.4) {\small $+$};
				\node  at (9.0,-3.4)  {\small $13^tf_{3,4}$};
				\node  at (10.5,-3.4) {\small $+$};
				\node  at (12.0,-3.4)  {\small $12^tf_{4,4}$};
				\node  at (13.5,-3.4) {\small $+$};
				\node  at (15.0,-3.4)  {\small $6^tf_{5,4}\hspace{-0.5mm}-\hspace{-0.5mm}11^tf_{5,0}$};
				\node  at (16.5,-3.4) {\small $+$};
				\node  at (18.0,-3.4)  {\small $7^tf_{6,4}\hspace{-0.5mm}-\hspace{-0.5mm}10^tf_{6,12}$};
				\node  at (19.5,-3.4)  {\small $+$};
				\node  at (21.0,-3.4)  {\small $9^tf_{7,4}$};
				\node  at (22.7,-3.4)  {$=0$};
				\node  at (0,-4.0)  {\small $16^tf_{0,5}$};
				\node  at (1.45,-4.0) {\small $+$};
				\node  at (3.0,-4.0)  {\small $2^tf_{1,5}\hspace{-0.5mm}-\hspace{-0.5mm}15^tf_{1,4}$};
				\node  at (4.5,-4.0) {\small $+$};
				\node  at (6.0,-4.0)  {\small $3^tf_{2,5}\hspace{-0.5mm}-\hspace{-0.5mm}14^tf_{2,7}$};
				\node  at (7.5,-4.0) {\small $+$};
				\node  at (9.0,-4.0)  {\small $13^tf_{3,5}$};
				\node  at (10.5,-4.0) {\small $+$};
				\node  at (12.0,-4.0) {\small $12^tf_{4,5}$};
				\node  at (13.5,-4.0) {\small $+$};
				\node  at (15.0,-4.0)  {\small $6^tf_{5,5}\hspace{-0.5mm}-\hspace{-0.5mm}11^tf_{5,1}$};
				\node  at (16.5,-4.0) {\small $+$};
				\node  at (18.0,-4.0)  {\small $7^tf_{6,5}\hspace{-0.5mm}-\hspace{-0.5mm}10^tf_{6,13}$};
				\node  at (19.5,-4.0)  {\small $+$};
				\node  at (21.0,-4.0)  {$9^tf_{7,5}$};
				\node  at (22.7,-4.0)  {$=0$};
				\node  at (0,-4.6)  {\small $f_{0,6}-16^t f_{0,7}$};
				\node  at (1.45,-4.6) {\small $+$};
				\node  at (3,-4.6)  {\small $15^tf_{1,6}$};
				\node  at (4.5,-4.6) {\small $+$};
				\node  at (6,-4.6)  {\small $14^tf_{2,6}$};
				\node  at (7.5,-4.6) {\small $+$};
				\node  at (9,-4.6)  {\small $4^tf_{3,6}\hspace{-0.5mm}-\hspace{-0.5mm}13^tf_{3,4}$};
				\node  at (10.5,-4.6) {\small $+$};
				\node  at (12,-4.6)  {$12^tf_{4,6}$};
				\node  at (13.5,-4.6) {\small $+$};
				\node  at (15,-4.6)  {\small $6^tf_{5,6}\hspace{-0.5mm}-\hspace{-0.5mm}11^tf_{5,2}$};
				\node  at (16.5,-4.6) {\small $+$};
				\node  at (18,-4.6)  {\small $7^tf_{6,6}\hspace{-0.5mm}-\hspace{-0.5mm}10^tf_{6,14}$};
				\node  at (19.5,-4.6)  {\small $+$};
				\node  at (21.0,-4.6)  {\small $9^tf_{7,6}$};
				\node  at (22.7,-4.6) {$=0$};
				\node  at (0,-5.2)  {$16^t f_{0,7}$};
				\node  at (1.45,-5.2) {\small $+$};
				\node  at (3,-5.2)  {\small $2^tf_{1,7}\hspace{-0.5mm}-\hspace{-0.5mm}15^tf_{1,6}$};
				\node  at (4.5,-5.2) {\small $+$};
				\node  at (6,-5.2)  {$14^tf_{2,7}$};
				\node  at (7.5,-5.2) {\small $+$};
				\node  at (9,-5.2)  {\small $4^tf_{3,7}\hspace{-0.5mm}-\hspace{-0.5mm}13^tf_{3,5}$};
				\node  at (10.5,-5.2) {\small $+$};
				\node  at (12.0,-5.2) {$12^tf_{4,7}$};
				\node  at (13.5,-5.2) {\small $+$};
				\node  at (15,-5.2)  {\small $6^tf_{5,7}\hspace{-0.5mm}-\hspace{-0.5mm}11^tf_{5,3}$};
				\node  at (16.5,-5.2) {\small $+$};
				\node  at (18.0,-5.2)  {\small $7^tf_{6,7}\hspace{-0.5mm}-\hspace{-0.5mm}10^tf_{6,15}$};
				\node  at (19.5,-5.2)  {\small $+$};
				\node  at (21.0,-5.2)  {\small $9^tf_{7,7}$};
				\node  at (22.7,-5.2)  {$=0$};
				\node  at (0,-5.8)  {\small $f_{0,8}\hspace{-0.5mm}-\hspace{-0.5mm}16^t f_{0,9}$};
				\node  at (1.45,-5.8) {\small $+$};
				\node  at (3,-5.8)  {$15^tf_{1,8}$};
				\node  at (4.5,-5.8) {\small $+$};
				\node  at (6,-5.8)  {\small $3^tf_{2,8}\hspace{-0.5mm}-\hspace{-0.5mm}14^tf_{2,10}$};
				\node  at (7.5,-5.8) {\small $+$};
				\node  at (9,-5.8)  {\small $13^tf_{3,8}$};
				\node  at (10.5,-5.8) {\small $+$};
				\node  at (12,-5.8)  {\small $5^tf_{4,8}\hspace{-0.5mm}-\hspace{-0.5mm}12^tf_{4,12}$};
				\node  at (13.5,-5.8) {\small $+$};
				\node  at (15,-5.8)  {\small $11^tf_{5,8}$};
				\node  at (16.5,-5.8) {\small $+$};
				\node  at (18.0,-5.8)  {\small $10^tf_{6,8}$};
				\node  at (19.5,-5.8)  {\small $+$};
				\node  at (21.0,-5.8)  {\small $8^tf_{7,8}\hspace{-0.5mm}-\hspace{-0.5mm}9^tf_{7,0}$};
				\node  at (22.7,-5.8)   {$=0$};
				\node  at (0,-6.4)  {\small $16^tf_{0,9}$};
				\node  at (1.45,-6.4) {\small $+$};
				\node  at (3.0,-6.4)  {\small $2^tf_{1,9}-15^tf_{1,8}$};
				\node  at (4.5,-6.4) {\small $+$};
				\node  at (6,-6.4)  {\small $3^tf_{2,9}-14^tf_{2,11}$};
				\node  at (7.5,-6.4) {\small $+$};
				\node  at (9,-6.4)  {\small $13^tf_{3,9}$};
				\node  at (10.5,-6.4) {\small $+$};
				\node  at (12,-6.4) {\small $5^tf_{4,9}-12^tf_{4,13}$};
				\node  at (13.5,-6.4) {\small $+$};
				\node  at (15,-6.4)  {\small $11^tf_{5,9}$};
				\node  at (16.5,-6.4) {\small $+$};
				\node  at (18.0,-6.4)  {\small $10^tf_{6,9}$};
				\node  at (19.5,-6.4)  {\small $+$};
				\node  at (21.0,-6.4)  {\small $8^tf_{7,9}\hspace{-0.5mm}-\hspace{-0.5mm}9^tf_{7,1}$};
				\node  at (22.7,-6.4)  {$=0$};
				\node  at (0,-7.0)    {\small $f_{0,10}\hspace{-0.5mm}-\hspace{-0.5mm}16^t f_{0,11}$};
				\node  at (1.45,-7.0) {\small $+$};
				\node  at (3,-7.0)  {\small $15^tf_{1,10}$};
				\node  at (4.5,-7.0) {\small $+$};
				\node  at (6,-7.0)  {\small $14^tf_{2,10}$};
				\node  at (7.5,-7.0) {\small $+$};
				\node  at (9,-7.0)  {\small $4^tf_{3,10}\hspace{-0.5mm}-\hspace{-0.5mm}13^tf_{3,8}$};
				\node  at (10.5,-7.0) {\small $+$};
				\node  at (12,-7.0)  {\small $5^tf_{4,10}\hspace{-0.5mm}-\hspace{-0.5mm}12^tf_{4,14}$};
				\node  at (13.5,-7.0) {\small $+$};
				\node  at (15,-7.0)  {\small $11^tf_{5,10}$};
				\node  at (16.5,-7.0) {\small $+$};
				\node  at (18,-7.0)  {\small $10^tf_{6,10}$};
				\node  at (19.5,-7.0)  {\small $+$};
				\node  at (21,-7.0)  {\small $8^tf_{7,10}\hspace{-0.5mm}-\hspace{-0.5mm}9^tf_{7,2}$};
				\node  at (22.7,-7.0)  {$=0$};
				\node  at (0,-7.6)  {\small $16^t f_{0,11}$};
				\node  at (1.45,-7.6) {\small $+$};
				\node  at (3,-7.6)  {\small $2^tf_{1,11}\hspace{-0.5mm}-\hspace{-0.5mm}15^tf_{1,10}$};
				\node  at (4.5,-7.6) {\small $+$};
				\node  at (6,-7.6)  {$14^tf_{2,11}$};
				\node  at (7.5,-7.6) {\small $+$};
				\node  at (9,-7.6)  {\small $4^tf_{3,11}\hspace{-0.5mm}-\hspace{-0.5mm}13^tf_{3,9}$};
				\node  at (10.5,-7.6) {\small $+$};
				\node  at (12,-7.6) {\small $5^tf_{4,11}\hspace{-0.5mm}-\hspace{-0.5mm}12^tf_{4,15}$};
				\node  at (13.5,-7.6) {\small $+$};
				\node  at (15,-7.6)  {\small $11^tf_{5,11}$};
				\node  at (16.5,-7.6) {\small $+$};
				\node  at (18,-7.6)  {\small $10^tf_{6,11}$};
				\node  at (19.5,-7.6)  {\small $+$};
				\node  at (21,-7.6)  {\small $8^tf_{7,11}\hspace{-0.5mm}-\hspace{-0.5mm}9^tf_{7,3}$};
				\node  at (22.7,-7.6) {$=0$};
				\node  at (0,-8.2)  {\small $f_{0,12}-16^t f_{0,13}$};
				\node  at (1.45,-8.2) {\small $+$};
				\node  at (3,-8.2)  {\small $15^tf_{1,12}$};
				\node  at (4.5,-8.2) {\small $+$};
				\node  at (6,-8.2)  {\small $3^tf_{2,12}\hspace{-0.5mm}-\hspace{-0.5mm}14^tf_{2,14}$};
				\node  at (7.5,-8.2) {\small $+$};
				\node  at (9,-8.2)  {$13^tf_{3,12}$};
				\node  at (10.5,-8.2) {\small $+$};
				\node  at (12,-8.2)  {$12^tf_{4,12}$};
				\node  at (13.5,-8.2) {\small $+$};
				\node  at (15,-8.2)  {\small $6^tf_{5,12}\hspace{-0.5mm}-\hspace{-0.5mm}11^tf_{5,8}$};
				\node  at (16.5,-8.2) {\small $+$};
				\node  at (18,-8.2)  {$10^tf_{6,12}$};
				\node  at (19.5,-8.2)  {\small $+$};
				\node  at (21,-8.2)  {\small $8^tf_{7,12}\hspace{-0.5mm}-\hspace{-0.5mm}9^tf_{7,4}$};
				\node  at (22.7,-8.2)  {$=0$};
				\node  at (0,-8.8)  {$16^tf_{0,13}$};
				\node  at (1.45,-8.8) {\small $+$};
				\node  at (3,-8.8)  {\small $2^tf_{1,13}\hspace{-0.5mm}-\hspace{-0.5mm}15^tf_{1,12}$};
				\node  at (4.5,-8.8) {\small $+$};
				\node  at (6,-8.8)  {\small $3^tf_{2,13}\hspace{-0.5mm}-\hspace{-0.5mm}14^tf_{2,15}$};
				\node  at (7.5,-8.8) {\small $+$};
				\node  at (9,-8.8)  {\small $13^tf_{3,13}$};
				\node  at (10.5,-8.8) {\small $+$};
				\node  at (12,-8.8) {\small $12^tf_{4,13}$};
				\node  at (13.5,-8.8) {\small $+$};
				\node  at (15,-8.8)  {\small $6^tf_{5,13}\hspace{-0.5mm}-\hspace{-0.5mm}11^tf_{5,9}$};
				\node  at (16.5,-8.8) {\small $+$};
				\node  at (18,-8.8)  {$10^tf_{6,13}$};
				\node  at (19.5,-8.8)  {\small $+$};
				\node  at (21,-8.8)  {\small $8^tf_{7,13}\hspace{-0.5mm}-\hspace{-0.5mm}9^tf_{7,5}$};
				\node  at (22.7,-8.8)  {$=0$};
				\node  at (0,-9.4)  {\small $f_{0,14}-16^t f_{0,15}$};
				\node  at (1.45,-9.4) {\small $+$};
				\node  at (3,-9.4)  {\small $15^tf_{1,14}$};
				\node  at (4.5,-9.4) {\small $+$};
				\node  at (6,-9.4)  {\small $14^tf_{2,14}$};
				\node  at (7.5,-9.4) {\small $+$};
				\node  at (9,-9.4)  {\small $4^tf_{3,14}\hspace{-0.5mm}-\hspace{-0.5mm}13^tf_{3,12}$};
				\node  at (10.5,-9.4) {\small $+$};
				\node  at (12,-9.4)  {\small $12^tf_{4,14}$};
				\node  at (13.5,-9.4) {\small $+$};
				\node  at (15,-9.4)  {\small $6^tf_{5,14}\hspace{-0.5mm}-\hspace{-0.5mm}11^tf_{5,10}$};
				\node  at (16.5,-9.4) {\small $+$};
				\node  at (18,-9.4)  {\small $10^tf_{6,14}$};
				\node  at (19.5,-9.4)  {\small $+$};
				\node  at (21,-9.4)  {\small $8^tf_{7,14}\hspace{-0.5mm}-\hspace{-0.5mm}9^tf_{7,6}$};
				\node  at (22.7,-9.4)  {$=0$};
				\node  at (0,-10)  {\small $16^t f_{0,15}$};
				\node  at (1.5,-10) {\small $+$};
				\node  at (3,-10)  {\small $2^tf_{1,15}\hspace{-0.5mm}-\hspace{-0.5mm}15^tf_{1,14}$};
				\node  at (4.5,-10) {\small $+$};
				\node  at (6,-10)  {\small $14^tf_{2,15}$};
				\node  at (7.5,-10) {\small $+$};
				\node  at (9,-10)  {\small $4^tf_{3,15}\hspace{-0.5mm}-\hspace{-0.5mm}13^tf_{3,13}$};
				\node  at (10.5,-10) {\small $+$};
				\node  at (12,-10) {\small $12^tf_{4,15}$};
				\node  at (13.5,-10) {\small $+$};
				\node  at (15,-10)  {\small $6^tf_{5,15}\hspace{-0.5mm}-\hspace{-0.5mm}11^tf_{5,11}$};
				\node  at (16.5,-10) {\small $+$};
				\node  at (18,-10)  {$10^tf_{6,15}$};
				\node  at (19.5,-10)  {\small $+$};
				\node  at (21,-10)  {\small $8^tf_{7,15}\hspace{-0.5mm}-\hspace{-0.5mm}9^tf_{7,7}$};
				\node  at (22.7,-10) {$=0$};
				\draw[dashed] (-1.25,-0.7) rectangle (1.15,-10.3);
				\draw[dashed] (1.7,-0.7) rectangle (4.25,-10.3);
				\draw[dashed] (4.7,-0.7) rectangle (7.25,-10.3);
				\draw[dashed] (7.7,-0.7) rectangle (10.25,-10.3);
				\draw[dashed] (10.7,-0.7) rectangle (13.25,-10.3);
				\draw[dashed] (13.7,-0.7) rectangle (16.25,-10.3);
				\draw[dashed] (16.7,-0.7) rectangle (19.25,-10.3);
				\draw[dashed] (19.7,-0.7) rectangle (22.25,-10.3);
			\end{tikzpicture}
			\captionsetup{justification=centering}
			\caption{The $t$-th PCG of the desired $(8,5)$ MDS code $\mathcal{Q}_4$ over $\mathbb{F}_{17}$ with sub-packetization level 16, where $t\in [0:3)$ and nodes 6, 7 are chosen as goal nodes.}
			\label{fig the PCG of Q4}
		\end{figure}
		
		Furthermore, Table \ref{table the data and LSE} gives the indices of data downloaded from each helper node and the indices of equations chosen from the $t$-th PCG of code $\mathcal{Q}_4$ when repairing a failed node.
		\begin{table}[htbp]
			\centering
			\caption{The data downloaded and equations used to repair a failed node of code $\mathcal{Q}_4$}\label{table the data and LSE}
			\begin{tabular}{|c|c|c|}
				\hline \multirow{2}{*}{Failed nodes} & The indices of data downloaded  & The rows of Figure \ref{fig the PCG of Q4} used\\
				& from each helper node & to repair the failed node\\
				\hline $0$ & $\{0,2,4,6,8,10,12,14\}$ & $\{1,3,5,7,9,11,13,15\}$  \\
				\hline $1$ & $\{1,3,5,7,9,11,13,15\}$ & $\{2,4,6,8,10,12,14,16\}$  \\
				\hline $2$ & $\{0,1,4,5,8,9,12,13\}$ & $\{1,2,5,6,9,10,13,14\}$  \\
				\hline $3$ & $\{2,3,6,7,10,11,14,15\}$ & $\{3,4,7,8,11,12,15,16\}$  \\
				\hline $4$ & $\{0,1,2,3,8,9,10,11\}$ & $\{1,2,3,4,9,10,11,12\}$  \\
				\hline $5$ & $\{4,5,6,7,12,13,14,15\}$ & $\{5,6,7,8,13,14,15,16\}$  \\
				\hline $6$ & $\{0,1,2,3,4,5,6,7\}$ & $\{1,2,3,4,5,6,7,8\}$  \\
				\hline $7$ & $\{8,9,10,11,12,13,14,15\}$ & $\{9,10,11,12,13,14,15,16\}$  \\
				\hline
			\end{tabular}
		\end{table}
		
	\end{Example}

	\section{Conclusion}
	In this paper, we provided a generic transformation that can generate new MDS code with $\delta$-optimal access property for an arbitrary set of two nodes from any existing MDS code, where $2\le \delta\le r$. A new explicit construction of high-rate MDS code $\mathcal{C}$ was obtained by directly applying the transformation to a scalar MDS code, whose sub-packetization level is much smaller than that of the Ye-Barg codes 1 and 2. Moreover, the new code $\mathcal{C}$ can support any single value $\delta \in [2:r]$, which is much more flexible than the Vajha-Babu-Kumar code and the Sasidharan-Myna-Kumar code.  Transformations that can generate new MDS code with $\delta$-optimal access property for a set of more than two nodes and support multiple $\delta$ simultaneously will be part of our on-going work.

\end{document}